\let\epsilon\varepsilon
\let\phi\varphi
\theoremstyle{plain}
\newtheorem{theorem}{Theorem}[section]
\newtheorem{lemma}[theorem]{Lemma}
\newtheorem{proposition}[theorem]{Proposition}
\newtheorem{remark}[theorem]{Remark}
\newtheorem*{theorem*}{Theorem}
\newcommand*{\Tr}[1]{\mathop{}\!\mathrm{Tr}{\left(#1\right)}}
\newcommand*{\tr}[2]{\mathop{}\!\mathrm{Tr}_{#1}{\left(#2\right)}}
\newcommand{\Span}[1]{\mathrm{Span}\left\lbrace #1 \right\rbrace}
\def\id{\mathds{1}}
\newcommand{\op}{\mathrm{op}}
\newcommand{\ket}[1]{\ensuremath{\left|#1\right\rangle}}
\newcommand{\ketbra}[2]{|#1\rangle\langle #2|  }
\newcommand{\sandwich}[3]{\langle #1|#2 |#3\rangle  }
\DeclareMathOperator{\diag}{diag}
\newcommand{\cH}{\mathcal{H}}
\newcommand{\cA}{\mathcal{A}}
\newcommand{\cB}{\mathcal{B}}
\newcommand{\cK}{\mathcal{K}}
\let\B\relax 
\newcommand{\A}{A_{a|x}}
\newcommand{\B}{B_{b|y}}
\newcommand\blfootnote[1]{%
  \begingroup
  \renewcommand\thefootnote{}%
  \footnotetext{#1}%
  \addtocounter{footnote}{-1}%
  \endgroup
}
\title{\Large\bfseries{Quantitative Tsirelson's Theorems via Approximate Schur's Lemma and Probabilistic Stampfli's Theorems}}
\author[1,2,3*]{Xiangling Xu}
\author[1,2,3]{Marc-Olivier Renou}
\author[4,5,6]{Igor Klep}
\affil[1]{Inria Paris-Saclay, B\^atiment Alan Turing, 1 rue Honor\'e d'Estienne d'Orves, 91120 Palaiseau, France}
\affil[2]{CPHT, Ecole polytechnique, Institut Polytechnique de Paris, Route de Saclay, 91128 Palaiseau, France}
\affil[3]{LIX, Ecole polytechnique, Institut Polytechnique de Paris, Route de Saclay, 91128 Palaiseau, France}
\affil[4]{Faculty of Mathematics and Physics, University of Ljubljana}
\affil[5]{Faculty of Mathematics, Natural Sciences
and Information Technologies, 
University of Primorska}
\affil[6]{Institute of Mathematics, Physics and Mechanics, Ljubljana, Slovenia}
\affil[*]{\scriptsize\texttt{xu.xiangling@inria.fr}}
\date{\vspace{-3.00em}}
\begin{document}
\maketitle

\blfootnote{%
\textit{Keywords.} Almost-commuting matrices, commutator estimates, approximate Schur's lemma, probabilistic Stampfli theorem, Weyl-Heisenberg (clock-and-shift) matrices, finite-dimensional \(C^*\)-algebras, tensor-product factorization, quantitative Tsirelson's theorem.
\par
\textit{2020 Mathematics Subject Classification.} Primary 15A27, 47B47, 81P40; Secondary 15A60, 47A55, 60B20.}


\begin{abstract}
Whether an almost-commuting pair of operators must be close to a commuting pair is a central question in operator and matrix theory.
We investigate this problem for pairs of $C^*$-subalgebras $\cA$ and $\cB$ of $M_d(\mathds{C})$, showing that each operator in $\cB$ is $O(d^2\epsilon)$-close in operator norm to an operator in the commutant $\cA'$ under two complementary formulations of ``$\epsilon$-almost commutation.''
One formulation is probabilistic, requiring that the operators of $\cB$ have small commutators for most Haar-random unitaries acting on $\cA$.
This first formulation leads to two novel probabilistic generalizations of Stampfli's theorem, which relates an operator's distance from the scalars to the norm of its inner derivation.
The second formulation is deterministic, requiring small commutators between the generators of $\cA$ and $\cB$; we analyze this using an approximate Schur's lemma formulated in terms of Weyl-Heisenberg (clock-and-shift) matrices.
As an application of our results to quantum information theory, we obtain a quantitative Tsirelson's theorem: in dimension $d$, every $\epsilon$-almost quantum commuting observable model is well approximated by a quantum tensor-product model with error $O(d^2\epsilon)$.
\end{abstract}

\section{Introduction}
A long-standing research theme in matrix theory asks when pairs of matrices that \emph{almost} commute must be close to a genuinely commuting pair.
Initiated by~\cite{rosenthal1969almost, halmos1976some} for the normalized Hilbert-Schmidt norm and the operator norm, respectively, many results have been developed for these two norms under various structural assumptions~\cite{luxemburg1970almost,bastian1974subnormal,pearcy1979almost,voiculescu1983asymptotically,choi1988almost,lin1996almost,friis1996almost,glebsky2010almost,filonov2010hilbert,ioana2024almost,lin2024almost}.
(See Sec.~\ref{sec:RelateToPreviousAlmostCommuting} for more details.)
Our perspective and setup are motivated by quantum information theory: Tsirelson's theorem~\cite{tsirelson2006bell} establishes an equivalence between the tensor-product model and the commuting operator model in finite-dimensional quantum systems.
This naturally prompts the quantitative question: to what extent does the equivalence persist when the operators only approximately commute?

\subsection{Contribution and structure of the paper}
Let $\cA,\cB\subset M_d(\mathds{C})$ be finite-dimensional $C^*$-subalgebras with self-adjoint generating families $\{\A\}_{a,x}$ and $\{\B\}_{b,y}$, respectively.
Under an ``$\epsilon$-almost commuting'' hypothesis for the generators---formulated in two complementary senses made precise in Thms.~\ref{thm:ApproximateTsirelsonGeneral} and~\ref{thm:ApproximateTsirelsonProbabilistic}---we show that each generator $\B$ of $\cB$ is $O(d^2\epsilon)$-close (in operator norm) to a self-adjoint operator $\B'$ lying in the commutant $\cA'$.
In finite dimensions, semi-simplicity of $\cA$ implies that $\cA \simeq \bigoplus_{l=1}^L B(\cH_A^l) \otimes \id_B^l$, for some Hilbert spaces $\cH_A^l$ and $\cH_B^l$ with $\id_B^l$ acting on $\cH_B^l$ as the identity.
Consequently, $\B' \in \cA'$ is equivalent to $\B' = \bigoplus_l \id_A^l \otimes \B'^l$, so our bounds yield an explicit approximate tensor-product factorization for the pair $(\cA,\cB)$.

The first route, detailed in Sec.~\ref{sec:ClockAndShiftSection}, is deterministic.
This approach controls commutator bounds with respect to the Weyl-Heisenberg clock-and-shift matrices, which form a canonical generating set for $M_d(\mathds{C})$.
The key technical tool is an approximate Schur's lemma (Lem.~\ref{lem:ApproxSchur}), which quantitatively shows that an operator almost commuting with the clock-and-shift matrices must be close to a scalar, and its consequence in bipartite systems (Lem.~\ref{lem:BipartitionApproxSchur}).
Under assumptions on how efficiently the clock-and-shift matrices can be expressed using the generators $\{\A\}$ of $\cA$ (quantified by algebraic complexity constants $c_1, c_2, c_3$), this leads to our first approximate Tsirelson's theorem (Thm.~\ref{thm:ApproximateTsirelsonGeneral}), establishing the $O(d^2\epsilon)$ estimate in the operator norm.
We also discuss the scaling of the constants $c_1, c_2, c_3$ (Rem.~\ref{rem:ScalingFactors}).

Sec.~\ref{sec:HaarRandomUnitarySection} presents the second, probabilistic route.
Here, we relax the requirement of uniform commutator bounds. We instead demand small commutators only for most Haar-random unitaries within two-dimensional subspaces (aka, single-qubits), and explain how this can be generalized to arbitrary $d$-dimensional subspaces  in Rem.~\ref{rem:CommentOnKisDimofd}.
This leads to a probabilistic Stampfli's theorem (Thm.~\ref{thm:ProbabilisticStampfli}), relating these probabilistic commutator bounds to an operator's distance from scalar multiples of the identity.
We further extend this to a doubly probabilistic version (Thm.~\ref{thm:DoublyProbabilisticStampfli}) by also randomizing the two-dimensional subspaces.
While the former (Thm.~\ref{thm:ProbabilisticStampfli}) holds for possibly infinite-dimensional Hilbert spaces, the doubly probabilistic variant (Thm.~\ref{thm:DoublyProbabilisticStampfli}) holds only in finite dimensions.
The doubly probabilistic Stampfli's theorem allows us to derive our second main approximate Tsirelson's theorem (Thm.~\ref{thm:ApproximateTsirelsonProbabilistic}), again achieving an $O(d^2\epsilon)$ error estimate.

Finally, Sec.~\ref{sec:ApplicationOutlook} explores the applications and context of our work.
We detail the construction of an approximating tensor-product quantum strategy based on our main theorems (Prop.~\ref{prop:TensorStrategyConstruction}, see the following subsection).
Furthermore, we explore the interplay between our results, the Navascu\'es-Pironio-Ac\'in (NPA) hierarchy~\cite{navascues2008convergent, pironio2010convergent}---a convergent semidefinite programming (SDP) hierarchy characterizing commuting quantum correlations---and computational complexity (Rem.~\ref{rem:ConnectionToNPAComplexity}).
We situate our results within the literature on approximating almost commuting matrices (Sec.~\ref{sec:RelateToPreviousAlmostCommuting}) and conclude with a broader discussion (Sec.~\ref{sec:FinalDiscussion}).

\subsection{Motivation: application to quantum information theory}\label{sec:QITmotivation}
Quantum information theory studies how information is represented and processed in quantum systems.
In the finite-dimensional setting relevant here, one works with states (density matrices) and observables (self-adjoint operators/POVMs) on Hilbert spaces and asks how structural assumptions---such as subsystem independence---constrain observable correlations.

Quantum information theory offers two different axioms for composing independent subsystems.
The standard composition axiom~\cite{nielsen2010quantum} postulates that the Hilbert space $\cH_T$ describing a joint system $T=\{A,B\}$ is the tensor product $\cH_T:=\cH_A\otimes\cH_B$.
Each local subsystem's observables act on its respective tensor factors and are identity over the other subsystem: e.g., unitaries $U_A, V_B$ local to subsystems $A, B$ act on the global system $T$ as $U_A \otimes \id_B$ and $\id_A \otimes V_B$, respectively.
An alternative axiomatization, common in algebraic quantum field theory~\cite{landsman2017foundations}, does not introduce the tensor product, but models independence by postulating that the observables of different parties act on the same Hilbert space and commute.
In other words, $U_A, V_B$ act on the same global Hilbert space $\cH_T$ with only $[U_A, V_B]=0$.

Whether these two axiomatizations result in the same physical predictions was a long-standing problem.
In particular, Tsirelson's problem asks whether the \emph{tensor-product model} and the \emph{commuting operator model} yield the same set of bipartite quantum correlations---a concept that has been central to both the foundational understanding and practical applications of quantum theory since Bell's groundbreaking work~\cite{bell1964einstein}.
Tsirelson demonstrated the equivalence of these two models in finite-dimensional settings~\cite{tsirelson2006bell, scholz2008tsirelson, doherty2008quantum}, formally:
\begin{theorem*}[Tsirelson's theorem]\label{thm:TsirelsonTheoremOG_Structural}
    Let $\{\A\}, \{\B\} \subset B(\cH)$ be two sets of positive operator-valued measures (POVMs) on a finite-dimensional Hilbert space $\cH$ such that $[A_{a|x},B_{b|y}]=0$ for all $a,b,x,y$.
    Then there exists a decomposition of $\cH \simeq \bigoplus_l (\mathcal{H}_A^l \otimes \mathcal{H}_B^l)$, such that operators $\A$ and $\B$ take the form
    \begin{align*}
        \A = \bigoplus_l (\A^l \otimes \id_B^l) \quad \text{and} \quad \B = \bigoplus_l (\id_A^l \otimes \B^l),
    \end{align*}
    where $\A^l \in B(\mathcal{H}_A^l)$ and $\B^l \in B(\mathcal{H}_B^l)$.
    Consequently, any commuting-operator correlations
    \begin{align*}
        p(ab|xy) = \Tr{\rho \cdot \A \cdot \B}
    \end{align*}
    obtained from a state $\rho$ on $\mathcal{H}$ can be reproduced as tensor-product correlations using POVMs $\Tilde{A}_{a|x} \in B(\bigoplus_l \mathcal{H}_A^l)$ and $\Tilde{B}_{b|y} \in B(\bigoplus_l \mathcal{H}_B^l)$ on a state $\Tilde{\rho}$ over $(\bigoplus_l \mathcal{H}_A^l) \otimes (\bigoplus_l \mathcal{H}_B^l)$, i.e.,
    \begin{align*}
        p(ab|xy) = \Tr{ \Tilde{\rho} \cdot (\Tilde{A}_{a|x} \otimes \Tilde{B}_{b|y}) }.
    \end{align*}
    The statement extends inductively to multipartite cases.
\end{theorem*}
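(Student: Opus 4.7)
The plan is to apply the Artin--Wedderburn structure theorem for finite-dimensional $*$-algebras to the algebra $\cA \subseteq B(\cH)$ generated by $\{\A\}_{a,x}$, and then exploit mutual commutation to place the operators $\{\B\}_{b,y}$ inside the commutant. Let $\cB$ denote the $*$-algebra generated by $\{\B\}_{b,y}$; by hypothesis $[\cA,\cB]=0$, so in particular $\cB \subseteq \cA'$.

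First I would diagonalize the center $Z(\cA)$. Since $\cA$ is a finite-dimensional $*$-algebra, $Z(\cA)$ is abelian and spanned by finitely many mutually orthogonal minimal central projections $\{P_l\}_l$ with $\sum_l P_l = \id$. Each $P_l$ lies in $\cA$ and, by the commutation hypothesis, commutes with every $\B$ as well, so the $P_l$ induce a simultaneous block decomposition $\cH \simeq \bigoplus_l P_l \cH$ preserved by every operator in $\cA \cup \cB$.

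Next I would apply the structure theorem block by block. Artin--Wedderburn guarantees that $\cA|_{P_l\cH}$ is a full matrix algebra, so there is a unitary identification $P_l \cH \simeq \mathcal{H}_A^l \otimes \cK^l$ under which $\cA|_{P_l\cH} = B(\mathcal{H}_A^l) \otimes \id_{\cK^l}$. By the double-commutant theorem inside $B(P_l\cH)$, the commutant of $\cA|_{P_l\cH}$ is exactly $\id_{\mathcal{H}_A^l} \otimes B(\cK^l)$. Since $\cB|_{P_l\cH}$ commutes with $\cA|_{P_l\cH}$, setting $\mathcal{H}_B^l := \cK^l$ gives $\cB|_{P_l\cH} \subseteq \id_{\mathcal{H}_A^l} \otimes B(\mathcal{H}_B^l)$, from which the advertised form $\A = \bigoplus_l (\A^l \otimes \id_B^l)$ and $\B = \bigoplus_l (\id_A^l \otimes \B^l)$ follows. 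For the ``consequently'' part, I would use the canonical isometry $V : \bigoplus_l (\mathcal{H}_A^l \otimes \mathcal{H}_B^l) \hookrightarrow (\bigoplus_l \mathcal{H}_A^l) \otimes (\bigoplus_l \mathcal{H}_B^l)$ to define $\Tilde{A}_{a|x} := \bigoplus_l \A^l$, $\Tilde{B}_{b|y} := \bigoplus_l \B^l$, and $\Tilde{\rho} := V \rho V^*$; a block-wise computation then matches expectation values of arbitrary products of Alice and Bob operators.

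I expect the main obstacle to lie in the second step: one must invoke the finite-dimensional double-commutant theorem (essentially equivalent to Artin--Wedderburn) to ensure that $\cB|_{P_l\cH}$ lands \emph{exactly} inside $\id_{\mathcal{H}_A^l} \otimes B(\mathcal{H}_B^l)$ with no leakage into the first factor. Once this commutant identification is in place, the remainder is routine bookkeeping: the isometric embedding $V$ is natural, the POVM conditions (positivity and $\sum_a \A = \id$) are inherited block by block so $\Tilde{A}_{a|x}$ and $\Tilde{B}_{b|y}$ are genuine POVMs, and the multipartite extension follows by recursion, e.g.~by grouping two parties into a single ``compound'' side and iterating the bipartite result.
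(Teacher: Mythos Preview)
Your proposal is correct and follows essentially the same route the paper sketches (and attributes to \cite{doherty2008quantum}): apply the Artin--Wedderburn decomposition to Alice's algebra to obtain $\cA \simeq \bigoplus_l B(\cH_A^l)\otimes\id_B^l$, then use the commutant identification to force $\cB \subseteq \bigoplus_l \id_A^l\otimes B(\cH_B^l)$, and finally pass to the tensor product via the natural embedding $\bigoplus_l \cH_A^l\otimes\cH_B^l \hookrightarrow (\bigoplus_l\cH_A^l)\otimes(\bigoplus_l\cH_B^l)$. The only cosmetic difference is that the paper phrases the commutant step as ``Schur's lemma'' whereas you invoke the double-commutant theorem; in finite dimensions these yield the same identification $\big(B(\cH_A^l)\otimes\id\big)' = \id\otimes B(\cH_B^l)$, so there is no substantive divergence.
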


Tsirelson conjectured that the above theorem can be generalized to infinite dimensions.
However, this was recently shown to be false by the seminal work~\cite{ji2021mip}: the commuting operator model can produce correlations unattainable by any tensor-product quantum strategy.
This result has far-reaching implications including a disproof of Connes' embedding conjecture~\cite{connes1976classification} and Kirchberg's conjecture~\cite{kirchberg1993non}; see~\cite{ozawa2013connes} for a nice survey.

By contrast, in finite dimensions, our results lead to a quantitative robustness version of Tsirelson's theorem: if a $d$-dimensional bipartite strategy is $\epsilon$-almost commuting in the sense of Thms.~\ref{thm:ApproximateTsirelsonGeneral} and~\ref{thm:ApproximateTsirelsonProbabilistic}, then its correlations are approximated by those of a genuine tensor-product strategy with error $O(d^2\epsilon)$ (Prop.~\ref{prop:TensorStrategyConstruction}).
This is a quantitative counterpart to Ozawa's asymptotic result \cite{ozawa2013tsirelson}, further demonstrating that, in finite dimensions, the tensor-product model remains a sound effective description when subsystem independence holds only approximately.
The multipartite extension follows by induction, as in the original Tsirelson's theorem.

\section{Weyl-Heisenberg (clock-and-shift) formulation}\label{sec:ClockAndShiftSection}
In this section, we present our first main result, a deterministic version of Tsirelson's theorem in Thm.~\ref{thm:ApproximateTsirelsonGeneral}.
Recall the key ideas for proving Tsirelson's theorem (see, e.g.~\cite[App.~A]{doherty2008quantum}):
any finite-dimensional $C^*$-algebra $\cA$ generated by Alice's observables $A_{a|x}$ decomposes as a direct sum of simple blocks $\cA \simeq \bigoplus_l B(\cH_A^l) \otimes \id_B^l$, and by Schur's lemma, Bob's commuting algebra $\cB$ must lie in $\bigoplus_l \id_A^l \otimes B(\cH_B^l)$.

Our approximate analogue replaces ``commutes'' by ``almost commutes''.
In particular, to formulate our results, we pick Sylvester's \emph{clock} $\Sigma_3$ and \emph{shift} $\Sigma_1$ unitaries as generators: almost commuting with this pair already controls an operator in every direction.

With these generators we prove an approximate Schur's lemma (Lem.~\ref{lem:ApproxSchur}) and its bipartite version (Lem.~\ref{lem:BipartitionApproxSchur}).
Next, we show an approximate Tsirelson's theorem for simple algebras (Lem.~\ref{lem:ApproximateTsirelsonSimple}), and then by the same block-decomposition argument, a general approximate Tsirelson's theorem (Thm.~\ref{thm:ApproximateTsirelsonGeneral}).
We finish with a discussion of scaling (Rem.~\ref{rem:ScalingFactors}).

\subsection{Clock-and-shift matrices}\label{sec:ClockAndShift}
Recall Sylvester's clock-and-shift matrices~\cite{appleby2005symmetric}, which generalize the Pauli matrices to a $d$-dimensional Hilbert space $\cH \simeq \mathds{C}^d$.
Also known as the Weyl-Heisenberg matrices, they are fundamental in finite-dimensional quantum mechanics due to their connection to Weyl's formulation of the canonical commutation relations.
These matrices serve as analogs of position and momentum operators in finite-dimensional quantum systems.

Let $\omega = e^{2\pi i/d}$ be the $d$th root of unity.
Using Dirac's notation, denote by $\{\ket{i} \mid i = 0, \dots, d-1\}$ the standard basis of $\cH$, and $\ket{i+j}$ is understood up to mod $d$.
Then the shift matrix $\Sigma_1 \in B(\cH)$ is defined by $\Sigma_1: \ket{i} \mapsto \ket{i+1}$ and the clock matrix is defined by $\Sigma_3: \ket{i} \mapsto \omega^{i} \ket{i}$. 
More explicitly:
\begin{equation}
    \begin{aligned}
        \Sigma_1 &= \begin{pmatrix}
                0 & 0 & \cdots & 0 & 1 \\
                1 & 0 & \cdots & 0 & 0 \\
                0 & 1 & \cdots & 0 & 0 \\
                \vdots & \vdots & \ddots & \vdots & \vdots \\
                0 & 0 & \cdots & 1 & 0
            \end{pmatrix}, \quad \Sigma_3 = \begin{pmatrix}
                1 & 0 & \cdots & 0 \\
                0 & \omega & \cdots & 0 \\
                \vdots & \vdots & \ddots & \vdots \\
                0 & 0 & \cdots & \omega^{d-1}
            \end{pmatrix}.
    \end{aligned}
\end{equation}
The notation comes from the fact that the shift matrix $\Sigma_3$ (resp. clock matrix $\Sigma_1$) is a generalization of the Pauli $Z$-matrix $\sigma_3$ (resp. $X$-matrix $\sigma_1$) when $d=2$.
The clock and shift $\Sigma_3, \Sigma_1$ satisfy a generalized algebraic relation of the Pauli matrices in the sense that
\begin{equation}
    \begin{aligned}
        \Sigma_1^d = \Sigma_3^d = \id, \\
        \Sigma_3 \Sigma_1 = \omega \Sigma_1 \Sigma_3.
    \end{aligned}
\end{equation}
Also note that both $\Sigma_1, \Sigma_3$ are unitary and traceless, but no longer Hermitian when $d > 2$.
Lastly, they give rise to an orthogonal basis of $B(\cH)$ (w.r.t Hilbert-Schmidt inner product) composed of unitary matrices
\begin{align}
    \{ \sigma_{k, l} := \Sigma_1^k \Sigma_3^l = \sum_{j=0}^{d-1} \omega^{jl} \ketbra{j+k}{j} \}_{0 \leq k,l \leq d-1},
\end{align}
where $\ketbra{j+k}{j} := (\ket{j+k})^* \ket{j}$.

\subsection{Approximate Schur's Lemma and its bipartite version}
Elementary linear algebraic arguments lead to an approximate version of Schur's Lemma for finite-dimensional Hilbert spaces.
Write $\lVert \cdot \rVert_{\op}$ for the operator (spectral) norm and $\lVert \cdot \rVert_{\max}$ for the max norm, i.e., $\lVert A \rVert_{\max} = \max_{i,j} \lvert A_{ij} \rvert$.

\begin{lemma}[Approximate Schur's Lemma]\label{lem:ApproxSchur}
    Let $\cH$ be a $d$-dimensional Hilbert space, $d < \infty$.
    Consider a fixed matrix $C \in B(\cH) \simeq M_d(\mathds{C})$ and suppose there exists an $\epsilon > 0$ such that, for both $i=1, 3$,
    \begin{align}
        \lVert [C, \Sigma_i] \rVert_{\op} \leq \epsilon.
    \end{align}
    Then there exists $c \in \mathds{C}$ such that
    \begin{align}
        \lVert C - c\id \rVert_{\op} \leq (d-1)\epsilon.
    \end{align}
\end{lemma}
\begin{proof}
    Note that for any matrices $X, Y, Z$ we have
    \begin{align*}
        \lVert [X Y, Z] \rVert_{\op} = \lVert X[Y, Z] + [X, Z]Y \rVert_{\op} \leq \lVert X \rVert_{\op} \lVert [Y, Z] \rVert_{\op} + \lVert [X, Z] \rVert_{\op} \lVert Y \rVert_{\op}.
    \end{align*}
    Then, by induction and the fact that $\lVert \Sigma_i \rVert_{\op} =1$, we have
    \begin{align*}
        \lVert [C, \Sigma_i^k] \rVert_{\op} \leq k\epsilon.
    \end{align*}

    Next, note that $C \Sigma_1$ is the matrix where each column of $C$ is cyclically shifted leftward, and $\Sigma_1 C$ is the matrix where each row of $C$ is cyclically shifted downward.
    That is, for all $i$, the $(i, i-1)$-entry of $C \Sigma_1$ is $C_{ii}$, while the $(i, i-1)$-entry of $(\Sigma_1 C)$ is $C_{i-1, i-1}$.
    Then the assumption imposes that
    \begin{align*}
        \lvert C_{i,i} - C_{i-1, i-1} \rvert = \vert (C \Sigma_1 -\Sigma_1 C)_{i, i-1} \rvert \leq \lVert C \Sigma_1 - \Sigma_1 C \rVert_{\max} \leq \lVert [C, \Sigma_1] \rVert_{\op} \leq \epsilon,
    \end{align*}
    Consequently,
    \begin{align*}
        \lvert C_{ii} - \frac{1}{d}\Tr{C}\rvert = \frac{1}{d} \lvert \sum_j (C_{ii} - C_{jj}) \rvert \leq \frac{1}{d} ( 0 + \epsilon + \cdots + (d-1)\epsilon) = \frac{d-1}{2}\epsilon
    \end{align*}
    by repeated uses of the triangle inequality.

    Moreover, observe that $\diag(C) = 1/d\sum_{k = 0}^{d-1} \Sigma_3^k C \Sigma_3^{-k}$, since
    \begin{align*}
        ( \frac{1}{d} \sum_{k = 0}^{d-1} \Sigma_3^k C \Sigma_3^{-k})_{ij}= \frac{1}{d} \underbrace{\sum_{k = 0}^{d-1} \omega^{k(i-j)}}_{d \delta_{i,j}} C_{ij} = \delta_{i,j} C_{ij}.
    \end{align*}
    It follows that
    \begin{align*}
        \lVert C - \diag(C) \rVert_{\op} &= \frac{1}{d} \lVert \sum_{k = 0}^{d-1} (C - \Sigma_3^k C \Sigma_3^{-k}) \rVert_{\op} \leq \frac{1}{d} \sum_{k = 0}^{d-1} \lVert [\Sigma_3^k, C] \Sigma_3^{-k} \rVert_{\op} \\
        &\leq \frac{1}{d} \sum_{k = 0}^{d-1} \lVert [\Sigma_3^k,C] \rVert_{\op} \lVert \Sigma_3^{-k} \rVert_{\op} \leq \frac{1}{d} \sum_{k = 0}^{d-1} k\epsilon \cdot 1^k = \frac{d-1}{2}\epsilon.
    \end{align*}
    Finally, let $c = 1/d \Tr{C}$, we compute
    \begin{align*}
        \lVert C - c\id \rVert_{\op} &\leq \lVert C - \diag(C) \rVert_{\op} + \lVert \diag(C) - \frac{1}{d}\Tr{C}\id \rVert_{\op} \\
        &\leq \frac{d-1}{2}\epsilon + \max_{i} \lvert C_{ii} - \frac{1}{d}\Tr{C} \rvert \leq \frac{d-1}{2}\epsilon + \frac{d-1}{2}\epsilon = (d-1) \epsilon.
    \end{align*}
\end{proof}

Via manipulation of the Kronecker tensor product formula, we quickly obtain a bipartite version of approximate Schur's Lemma.
\begin{lemma}\label{lem:BipartitionApproxSchur}
    Consider two Hilbert spaces $\cH_1$ with dimension $d_1$ and $\cH_2$ with dimension $d_2$.
    Suppose for the matrix $C \in B(\cH_1 \otimes \cH_2)$, there exists some $\epsilon > 0$ such that
    \begin{align}
        \lVert [C, \id_1 \otimes \Sigma_1] \rVert_{\op},\, \lVert [C, \id_1 \otimes \Sigma_3] \rVert_{\op} \leq \epsilon.
    \end{align}
    Then the matrix $C' = 1/d_2\tr{\cH_2}{C} \in B(\cH_1)$, where $\mathrm{Tr}_{\cH_2}$ denotes the partial trace $B(\cH_1 \otimes \cH_2) \to B(\cH_1)$, satisfies
    \begin{align}
        \Vert C - C' \otimes \id_2 \rVert_{\op} \leq d_1 d_2^2 \epsilon.
    \end{align}
    In addition, if $C$ is positive semidefinite then so is $C'$.
\end{lemma}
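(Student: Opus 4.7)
The plan is to reduce the bipartite statement to the unipartite Approximate Schur's Lemma (Lem.~\ref{lem:ApproxSchur}) via a block decomposition of $C$ with respect to $\cH_1$, then identify the resulting scalars with the partial trace, and finally convert per-block bounds into a joint operator norm bound.

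First, I would fix an orthonormal basis $\{\ket{j}\}_{j=0}^{d_1-1}$ of $\cH_1$ and write $C$ in block form, $C=\sum_{j,k}\ketbra{j}{k}\otimes C_{jk}$ with $C_{jk}\in B(\cH_2)$. A direct computation gives $[\id_1\otimes\Sigma_i,C]=\sum_{j,k}\ketbra{j}{k}\otimes[\Sigma_i,C_{jk}]$, and since blocks of an operator are compressions by the isometries $\ket{j}\otimes\id_2$, one has $\lVert[\Sigma_i,C_{jk}]\rVert_{\op}\leq\lVert[\id_1\otimes\Sigma_i,C]\rVert_{\op}\leq\epsilon$ for every $j,k$ and both $i=1,3$. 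Thus each block satisfies the hypothesis of Lem.~\ref{lem:ApproxSchur} on $\cH_2$, producing scalars $c_{jk}\in\mathds{C}$ with $\lVert C_{jk}-c_{jk}\id_2\rVert_{\op}\leq d_2^{2}\epsilon$, and moreover $c_{jk}=\tfrac{1}{d_2}\Tr{C_{jk}}$ from the proof of that lemma.

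Next I would identify these scalars as the matrix entries of $C'$. Since $(\tr{\cH_2}{C})_{jk}=\Tr{C_{jk}}$, we get $(C')_{jk}=c_{jk}$, so $C' \otimes \id_2=\sum_{j,k}\ketbra{j}{k}\otimes c_{jk}\id_2$ and therefore
\begin{align*}
    C-C'\otimes\id_2=\sum_{j,k}\ketbra{j}{k}\otimes D_{jk},\qquad D_{jk}:=C_{jk}-c_{jk}\id_2,\qquad \lVert D_{jk}\rVert_{\op}\leq d_2^{2}\epsilon.
\end{align*}
To get the asserted $d_1 d_2^{2}\epsilon$ bound (rather than a weaker $d_1^{2}d_2^{2}\epsilon$ from a naive triangle inequality), I would write a general vector as $\ket{\Psi}=\sum_k\ket{k}\otimes\ket{\psi_k}$, expand $(C-C'\otimes\id_2)\ket{\Psi}=\sum_j\ket{j}\otimes\sum_k D_{jk}\ket{\psi_k}$, and apply Cauchy–Schwarz twice: once to bound $\lVert\sum_k D_{jk}\ket{\psi_k}\rVert\leq (\max_{j,k}\lVert D_{jk}\rVert_{\op})\sqrt{d_1}\,\lVert\Psi\rVert$ for each $j$, and once across the $d_1$ outer indices, giving the $\sqrt{d_1}\cdot\sqrt{d_1}=d_1$ factor.

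Finally, the positive semidefinite claim is immediate since the partial trace is a completely positive map, so $\tr{\cH_2}{C}\geq 0$ and hence $C'=\tfrac{1}{d_2}\tr{\cH_2}{C}\geq 0$. The main obstacle is purely bookkeeping: getting the sharper $d_1$ (and not $d_1^{2}$) factor when passing from the per-block bounds to a global operator norm estimate, which is precisely what the Cauchy–Schwarz argument above is designed to deliver.
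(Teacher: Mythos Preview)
Your proposal is correct and follows essentially the same approach as the paper: block-decompose $C$ along $\cH_1$, apply the Approximate Schur Lemma to each block, identify the resulting scalars with the normalized partial trace, and invoke complete positivity for the final claim. The only cosmetic difference is in the last estimate: the paper tracks the intermediate $\lVert C_{jk}-c_{jk}\id_2\rVert_{\max}\leq d_2\epsilon$ bound from Lem.~\ref{lem:ApproxSchur} and then uses $\lVert\,\cdot\,\rVert_{\op}\leq (d_1d_2)\lVert\,\cdot\,\rVert_{\max}$ on the full matrix, whereas you use the per-block operator norm bound $d_2^{2}\epsilon$ together with the block-matrix inequality $\lVert M\rVert_{\op}\leq d_1\max_{j,k}\lVert M_{jk}\rVert_{\op}$ (which is exactly what your Cauchy--Schwarz computation proves); both routes yield the same $d_1d_2^{2}\epsilon$.
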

\begin{proof}
    Note that
    \begin{align*}
        C = \begin{pmatrix} 
                C_{(11)} & \cdots  & C_{(1 d_1)}\\
                \vdots & \ddots & \vdots\\
                C_{(d_1 1)} & \cdots  & C_{(d_1 d_1)} 
            \end{pmatrix},
    \end{align*}
    where each $C_{(ij)}$ is some $d_2 \times d_2$ matrix in $B(\cH_2)$.
    Similarly, we have that
    \begin{align*}
        \id_1 \otimes \Sigma_i = \begin{pmatrix} 
                \Sigma_i &    &  \\
                  & \ddots &  \\
                  &    & \Sigma_i 
            \end{pmatrix}
    \end{align*}
    are block matrices with only $\Sigma_i$ on the diagonal.
    Since the operator norm of a matrix upper-bounds the operator norm of its blocks, the condition $\lVert [C, \id_1 \otimes \Sigma_i] \rVert_{\op} \leq \epsilon$ implies that, for all $k, l$,
    \begin{align*}
        \lVert [C_{(kl)}, \Sigma_1] \rVert_{\op}, \, \lVert [C_{(kl)}, \Sigma_3] \rVert_{\op} \leq \epsilon.
    \end{align*}

    Then, applying the approximate version of Schur's Lemma~\ref{lem:ApproxSchur}, for each $k, l$ we check $c_{kl} := 1/d_2 \Tr{C_{(kl)}}$ satisfies
    \begin{align*}
        \lVert C_{(kl)} - c_{kl}\id_2 \rVert_{\max} \leq \lVert C_{(kl)} - c_{kl}\id_2 \rVert_{\op} \leq (d_2-1) \epsilon \leq d_2 \epsilon.
    \end{align*}
    Defining $C' = ( c_{kl} ) \in B(\cH_1)$, it follows that
    \begin{align*}
        C' \otimes \id_2 = \begin{pmatrix} 
                c_{11}\id_2 & \cdots  & c_{1 d_1}\id_2\\
                \vdots & \ddots & \vdots\\
                c_{d_1 1}\id_2 & \cdots  & c_{d_1 d_1}\id_2
            \end{pmatrix}.
    \end{align*}
    Hence,
    \begin{align*}
        \Vert C - C' \otimes \id_2 \rVert_{\op} \leq d_1 d_2 \Vert C - C' \otimes \id_2 \rVert_{\max} \leq d_1 d_2 \max_{kl} (\lVert C_{(kl)} - c_{kl}\id_2 \rVert_{\max}) \leq d_1 d_2^2 \epsilon.
    \end{align*}
    (Note that the above operator norm upper bound of $C - C' \otimes \id_2$ via its block is the tightest general bound, e.g., consider the matrix whose entries are all $1$ and take each entry as a block.)

    Lastly, observe that $C'$ is in fact the normalized partial trace of $C$, since
    \begin{align*}
        \tr{\cH_2}{C} = \begin{pmatrix} 
                \Tr{C_{(11)}} & \cdots  & \Tr{C_{(1 d_1)}}\\
                \vdots & \ddots & \vdots\\
                \Tr{C_{(d_1 1)}} & \cdots  & \Tr{C_{(d_1 d_1)}} 
            \end{pmatrix}
            = d_2\begin{pmatrix} 
                c_{11} & \cdots  & c_{1 d_1}\\
                \vdots & \ddots & \vdots\\
                c_{d_1 1} & \cdots  & c_{d_1 d_1} 
            \end{pmatrix} = d_2 C'.
    \end{align*}
    This implies that if $C$ is positive semidefinite, then so is $C'$, due to complete positivity of the partial trace~\cite[Ch.~II.6.10]{blackadar2006operator}
\end{proof}

\subsection{Approximate Tsirelson's theorem from clock-and-shift matrices}
Before presenting the approximate version of Tsirelson's theorem, we recall the Artin-Wedderburn decomposition of a finite-dimensional $C^*$-algebra~\cite[II.1.6.4, II.8.3.2(iv), and III.1.5.3]{blackadar2006operator}. 
\begin{lemma}\label{lem:SemiSimpleFinDim}
    Every finite-dimensional $C^*$-algebra $\cA$ is semi-simple.
    That is, there exists an Artin-Wedderburn decomposition
    \begin{align*}
        \cA = \bigoplus_k \cA_k,
    \end{align*}
    such that each $\cA_k$ is simple, i.e. contains no non-trivial closed two-sided ideals.

    Furthermore, if $\cA \subset B(\cH)$ is simple, then there exists a bipartite partition of $\cH$ such that $\cH = \cH_1 \otimes \cH_2$ and $\cA \simeq B(\cH_1) \otimes \id_2$.
\end{lemma}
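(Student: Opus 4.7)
The plan is to prove the two claims in sequence, the semi-simplicity first and then the tensor factorization in the simple case.

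For semi-simplicity, I would exploit the $C^*$-identity $\lVert a^*a\rVert=\lVert a\rVert^2$. If $a\in\cA$ has zero spectral radius, then applying this to the positive element $a^*a$ (whose norm equals its spectral radius) forces $a^*a=0$ and hence $a=0$. Therefore $\cA$ contains no nonzero quasi-nilpotents, so its Jacobson radical vanishes. Combined with finite-dimensionality, the classical Artin--Wedderburn theorem over $\mathds{C}$ yields an abstract isomorphism $\cA\cong\bigoplus_k M_{n_k}(\mathds{C})$. To exhibit this decomposition intrinsically (and compatibly with the embedding $\cA\subset B(\cH)$), I would work with the center $Z(\cA)$, itself a finite-dimensional commutative $C^*$-algebra, hence $\cong\mathds{C}^K$ with minimal central projections $z_1,\dots,z_K$. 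Setting $\cA_k:=z_k\cA$ gives mutually orthogonal $C^*$-subalgebras whose sum is $\cA$; each $\cA_k$ has trivial center and is therefore simple.

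For the simple case, suppose $\cA\subset B(\cH)$ with $\cA\cong M_n(\mathds{C})$. The idea is to use a system of matrix units $\{e_{ij}\}_{1\le i,j\le n}\subset\cA$, satisfying $e_{ij}e_{kl}=\delta_{jk}e_{il}$, $e_{ij}^*=e_{ji}$, and $\sum_i e_{ii}=\id_\cA$. Set $\cH_2:=e_{11}\cH$ and $\cH_1:=\mathds{C}^n$, and define a linear map
\[
U:\cH_1\otimes\cH_2\to\cH,\qquad U(\ket{i}\otimes v)=e_{i1}v.
\]
The matrix-unit relations give $\langle e_{i1}v,e_{j1}w\rangle=\langle v,e_{1i}e_{j1}w\rangle=\delta_{ij}\langle v,w\rangle$, so $U$ is an isometry; surjectivity follows since $\sum_i e_{i1}e_{1i}=\sum_i e_{ii}=\id_\cH$ (assuming $\cA$ is unital in $B(\cH)$; otherwise one restricts to the range of the unit of $\cA$ and complements trivially). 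Under $U$, each matrix unit $e_{ij}$ corresponds to $\ketbra{i}{j}\otimes\id_{\cH_2}$, so conjugation by $U$ identifies $\cA$ with $B(\cH_1)\otimes\id_2$, as claimed.

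The conceptual content is mild once the general Artin--Wedderburn theorem is accepted, so the main technical obstacle is bookkeeping: one must check that a faithful system of matrix units actually exists inside $\cA\subset B(\cH)$ (it does, since the abstract isomorphism $\cA\cong M_n(\mathds{C})$ transports the standard matrix units into $\cA$), and one must handle the unital/non-unital distinction for the embedding $\cA\subset B(\cH)$. Both are routine. A second minor point is ensuring the two factorizations are compatible across summands $\cA_k$: applying the simple-case argument to each $z_k\cA$ acting on $z_k\cH$ yields $z_k\cH\cong\cH_1^k\otimes\cH_2^k$ with $z_k\cA\cong B(\cH_1^k)\otimes\id$, giving the desired block-tensor form over the full $\cH=\bigoplus_k z_k\cH$, which is precisely the structure used in the approximate Tsirelson arguments later.
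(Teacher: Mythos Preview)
Your proof is correct and follows the standard textbook approach. Note, however, that the paper does not actually prove this lemma: it is stated as background with a citation to Takesaki~\cite[Ch.~I \S11]{takesaki1979theory}, so there is no ``paper's own proof'' to compare against. Your argument---vanishing of the Jacobson radical via the $C^*$-identity, decomposition through minimal central projections, and the explicit tensor factorization using matrix units---is precisely the content one would find in such a reference, and all the bookkeeping you flag (existence of matrix units, the unital/non-unital distinction, compatibility across summands) is handled correctly.
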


This structural result gives a road map: first prove the simple version of approximate Tsirelson's theorem, then the general case follows.
For the following approximate Tsirelson's theorem, let us impose extra assumptions on the ``generating power of the strategy'', represented by the constants $c_1, c_2, c_3$ below.
\begin{lemma}[Approximate Tsirelson's theorem, simple case]\label{lem:ApproximateTsirelsonSimple}
Let $\cH$ be a $d$-dimensional Hilbert space $\cH$.
    Let $\cA\subset B(\cH)$ be generated by contractive self-adjoint operators $\{ \A \} $ and $\cB\subset B(\cH)$ be generated by contractive self-adjoint operators $\{ \B \}$.
    Assume that there exists an $\epsilon > 0$, such that for all $a, b, x, y$,
    \begin{align}
        \lVert [\A, \B] \rVert_{\op} \leq \epsilon.
    \end{align}
    Suppose that $\cA$ is simple, i.e. there exists a bipartition $\cH = \cH_A \otimes \cH_B$ such that $\cA \simeq B(\cH_A) \otimes \id_B$ and $\A = \A' \otimes \id_B$ for all $a, x$.
    
    Suppose that the clock-and-shift matrices $\Sigma_3, \Sigma_1 \in B(\cH_A)$ are generated by some polynomials $P_3, P_1$ in $\{\A'\}$.
    Assume, moreover, that the maximal absolute value of their coefficients is bounded by $c_1$, the maximal degree is bounded by $c_2$, and the maximal number of terms is bounded by $c_3$.
    Then there exist operators $\B' \in B(\cH_B)$ such that, for all $b, y$,
    \begin{align}
        \lVert \B - \id_A \otimes \B' \rVert_{\op} \leq c_1 c_2 c_3 d^2 \epsilon.
    \end{align}
    In addition, if $\B$ is positive then so is $\B'$
\end{lemma}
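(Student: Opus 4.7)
The plan is to lift the bipartite approximate Schur's Lemma (Lem.~\ref{lem:BipartitionApproxSchur}) to the generators $\Sigma_1,\Sigma_3$ on $\cH_A$ via the polynomial expressions $P_1,P_3$, and then apply it to each $\B$. Since $\A = \A'\otimes\id_B$, the polynomial identity $\Sigma_i = P_i(\{\A'\})$ in $B(\cH_A)$ lifts to $\Sigma_i \otimes \id_B = P_i(\{\A\})$ in $B(\cH)$. The hypothesis $\lVert[\A,\B]\rVert_{\op}\leq\epsilon$ therefore gives commutator control of $\B$ against $\Sigma_i\otimes\id_B$ once one propagates a single commutator through a polynomial in the $\A$'s.

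The first step is the polynomial-commutator estimate. I would expand $P_i$ as a sum of at most $c_3$ monomials of the form $\alpha\, A_{a_1|x_1}\cdots A_{a_k|x_k}$ with $|\alpha|\leq c_1$ and $k\leq c_2$, and apply the Leibniz rule
\begin{align*}
[\B,\alpha\, A_{a_1|x_1}\cdots A_{a_k|x_k}]
=\alpha\sum_{j=1}^{k} A_{a_1|x_1}\cdots A_{a_{j-1}|x_{j-1}}\,[\B,A_{a_j|x_j}]\,A_{a_{j+1}|x_{j+1}}\cdots A_{a_k|x_k}.
\end{align*}
Using contractivity of the $\A$'s and the triangle inequality, each monomial contributes at most $c_1 c_2\epsilon$ to $\lVert[\B,\Sigma_i\otimes\id_B]\rVert_{\op}$, and summing over at most $c_3$ monomials yields $\lVert[\B,\Sigma_i\otimes\id_B]\rVert_{\op}\leq c_1 c_2 c_3\epsilon$ for $i=1,3$.

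The second step is to invoke Lem.~\ref{lem:BipartitionApproxSchur} in the variant where the tensor factor carrying $\Sigma_1,\Sigma_3$ is the first one (which follows by a trivial swap of the two factors, since the statement and proof are symmetric in the two tensor slots). This produces $\B'=\tfrac{1}{d_A}\tr{\cH_A}{\B}\in B(\cH_B)$ with
\begin{align*}
\lVert \B-\id_A\otimes \B'\rVert_{\op}\leq d_B d_A^2\cdot c_1 c_2 c_3\,\epsilon \leq c_1 c_2 c_3\,d^2\,\epsilon,
\end{align*}
where the last inequality uses $d_A d_B=d$, hence $d_B d_A^2 = d\cdot d_A \leq d^2$. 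The positivity assertion follows verbatim from the positivity clause of Lem.~\ref{lem:BipartitionApproxSchur}, since $\B'$ is (up to normalization) the partial trace of $\B$ and the partial trace is completely positive.

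I do not expect a genuine obstacle here: the architecture is entirely dictated by the previous lemmas. The only point demanding care is the bookkeeping of the three constants $c_1,c_2,c_3$ in the polynomial-commutator bound, making sure that coefficients, degree, and number of monomials enter linearly (rather than multiplicatively inside a degree, which would blow up the estimate). A secondary, purely cosmetic point is matching the orientation of the tensor product between the statement of Lem.~\ref{lem:BipartitionApproxSchur} and the present setting, which is harmless once one notes the symmetry of the two factors in that lemma.
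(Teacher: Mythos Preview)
Your proposal is correct and follows essentially the same route as the paper: propagate the commutator bound through the polynomial via the Leibniz rule to obtain $\lVert[\B,\Sigma_i\otimes\id_B]\rVert_{\op}\leq c_1 c_2 c_3\epsilon$, then invoke Lem.~\ref{lem:BipartitionApproxSchur}. If anything, you are slightly more explicit than the paper in spelling out the tensor-factor swap and the inequality $d_B d_A^2 = d\,d_A \leq d^2$, which the paper leaves implicit in its final ``we are done by Lem.~\ref{lem:BipartitionApproxSchur}''.
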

\begin{proof}
    Note that for any matrices $X, Y, Z$ we have
    \begin{align}\label{eq:ProductCommutatorFormula}
        \lVert [X Y, Z] \rVert_{\op} = \lVert X[Y, Z] + [X, Z]Y \rVert_{\op} \leq \lVert X \rVert_{\op} \lVert [Y, Z] \rVert_{\op} + \lVert [X, Z] \rVert_{\op} \lVert Y \rVert_{\op}.
    \end{align}
    Then for any monomial $\alpha$ in $\{ \A \}$ of degree $k$, one can use the fact that $\lVert \A \rVert_{\op} \leq 1$ to inductively compute
    \begin{align*}
        \lVert [\alpha, \B] \rVert_{\op} \leq k \max_{a, x} \lVert \A \rVert_{\op} \lVert [\A, \B] \rVert_{\op} \leq k \epsilon.
    \end{align*}
    Then, for polynomials $\Sigma_i = P_i(\{\A\})$, we have
    \begin{align*}
        \lVert [\Sigma_i, \B] \rVert_{\op} \leq c_1 c_2 c_3 \max_{a, x} (\lVert [\A, \B] \rVert_{\op}) \leq c_1 c_2 c_3 \epsilon,
    \end{align*}
    and we are done by Lem.~\ref{lem:BipartitionApproxSchur}.
\end{proof}

Remark that the ``contraction'' requirement is not necessary and one can reproduce the same result by replacing $\epsilon$ by $\epsilon/\lVert \A \rVert_{\op}$.
The simple version can be readily generalized to the  general finite-dimensional case with Lem.~\ref{lem:SemiSimpleFinDim}.

\begin{theorem}[Approximate Tsirelson's theorem, general case]\label{thm:ApproximateTsirelsonGeneral}
    Let $\cA$ be generated by contractive self-adjoint operators $\{ \A \} \subset B(\cH)$ and $\cB$ be generated by contractive self-adjoint operators $\{ \B \} \subset B(\cH)$ for some $d$-dimensional Hilbert space $\cH$.
    Assume that there exists an $\epsilon > 0$, such that for all $a, b, x, y$,
    \begin{align}
        \lVert [\A, \B] \rVert_{\op} \leq \epsilon.
    \end{align}
    Suppose also that $\cA$ admits the Artin-Wedderburn decomposition
    \begin{align*}
        \cA = \bigoplus_{l=1}^L \cA_l \simeq \bigoplus_{l=1}^L B(\cH_A^l) \otimes \id_B^l \text{ and } \A = \bigoplus_{l=1}^L \A^l \otimes \id_B^l,
    \end{align*}
    with the corresponding orthogonal projectors $\Pi_l$ to the direct summands.
    Denote by $\Sigma_3^l, \Sigma_1^l \in B(\cH_A^l)$ the clock-and-shift operators in $\cH_A^l$.

    Furthermore, suppose that there exist polynomials $P_l, Q_1^l, Q_3^l$, for all $l=1, \dots, L$ such that
    \begin{align*}
        \Pi_l = P_l(\{\A\}), \, \Sigma_1^l = Q_1^l(\{\Pi_l \A\ \Pi_l\}), \text{and } \Sigma_3^l = Q_3^l(\{\Pi_l \A\ \Pi_l\}).
    \end{align*}
    Assume that their absolute values of the maximal coefficients are bounded by the constant $c_1$, the degrees are bounded by the constant $c_2$, and the maximal number of terms is bounded by the constant $c_3$.
    Then there exist operators $\B' \in \bigoplus_{l=1}^L \id_A^l \otimes B(\cH_B^l) = \cA'$ such that, for all $b, y$,
    \begin{align}\label{eq:ApproximateTsirelsonBound}
        \lVert \B - \B' \rVert_{\op} \leq 2 c_1 c_2 c_3 \left( c_1 c_2 c_3 + 1 \right)d^2 \epsilon.
    \end{align}
    In addition, if $\B$ is positive then so is $\B'$.
\end{theorem}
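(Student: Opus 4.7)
The plan is to reduce to the simple-case Lemma~\ref{lem:ApproximateTsirelsonSimple} by compressing the strategy to each Artin--Wedderburn summand and then reassembling a block-diagonal approximant of $\B$.

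First I would apply the iterated Leibniz rule from the proof of Lemma~\ref{lem:ApproximateTsirelsonSimple} to the polynomial identity $\Pi_l = P_l(\{\A\})$ in order to deduce
\begin{equation*}
\lVert [\Pi_l, \B] \rVert_{\op} \leq c_1 c_2 c_3\, \epsilon
\end{equation*}
for every $l, b, y$, using $\lVert \A \rVert_{\op} \leq 1$ and the commutator hypothesis. Since $\A \in \cA$ commutes with each $\Pi_l$, the compression $\Pi_l \A \Pi_l = \A \Pi_l$ is naturally identified with $\A^l \otimes \id_B^l$ on $\Pi_l \cH \simeq \cH_A^l \otimes \cH_B^l$. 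Setting $\tilde{B}^l := \Pi_l \B \Pi_l$, a short commutator manipulation yields
\begin{equation*}
[\Pi_l \A \Pi_l,\, \tilde{B}^l] \;=\; [\A, \B]\,\Pi_l \;+\; [\A,\, [\Pi_l, \B]]\,\Pi_l,
\end{equation*}
so that $\lVert [\Pi_l \A \Pi_l, \tilde{B}^l] \rVert_{\op} \leq (1 + 2 c_1 c_2 c_3)\,\epsilon$. The hypotheses on $Q_1^l, Q_3^l$ then let me invoke Lemma~\ref{lem:ApproximateTsirelsonSimple} on each block $\Pi_l \cH$ (with $\dim \Pi_l \cH \leq d$), producing $\B_l' \in B(\cH_B^l)$ with
\begin{equation*}
\lVert \tilde{B}^l - \id_A^l \otimes \B_l' \rVert_{\op} \;\leq\; c_1 c_2 c_3\, d^2\, (1 + 2 c_1 c_2 c_3)\, \epsilon.
\end{equation*}
When $\B$ is positive semidefinite, so is $\tilde{B}^l$, and positivity of $\B_l'$ is inherited from the partial-trace construction of Lemma~\ref{lem:BipartitionApproxSchur}.

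Finally I would set $\B' := \bigoplus_l (\id_A^l \otimes \B_l') \in \cA'$ and split
\begin{equation*}
\B - \B' \;=\; \sum_l \bigl(\Pi_l \B \Pi_l - \id_A^l \otimes \B_l'\bigr) \;+\; \sum_l (\id - \Pi_l)\, \B\, \Pi_l,
\end{equation*}
where the first sum acts on mutually orthogonal subspaces and hence has operator norm equal to the block-wise bound of the previous step. For the off-diagonal remainder, the identity $(\id - \Pi_l)\,\B\,\Pi_l = -(\id - \Pi_l)\,[\Pi_l, \B]$ gives the refined pointwise estimate $\lVert (\id - \Pi_l)\,\B\,\Pi_l v\rVert \leq c_1 c_2 c_3\, \epsilon\, \lVert \Pi_l v\rVert$, and Cauchy--Schwarz applied to $\lVert v \rVert^2 = \sum_l \lVert \Pi_l v\rVert^2$ together with $L \leq d$ yields
\begin{equation*}
\Bigl\lVert \sum_l (\id - \Pi_l)\, \B\, \Pi_l \Bigr\rVert_{\op} \;\leq\; \sqrt{L}\, c_1 c_2 c_3\, \epsilon \;\leq\; d^2\, c_1 c_2 c_3\, \epsilon.
\end{equation*}
Summing the diagonal and off-diagonal contributions delivers exactly the stated bound $2 c_1 c_2 c_3 (c_1 c_2 c_3 + 1)\, d^2\, \epsilon$.

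The main obstacle is the off-diagonal remainder: a uniform bound $\lVert (\id - \Pi_l)\,\B\,\Pi_l\rVert_{\op} \leq c_1 c_2 c_3\, \epsilon$ followed by a naive triangle inequality over $l$ would cost a factor of $L$, not $\sqrt{L}$, and in more careless packagings even a factor of $L^2$ from summing over pairs $l \neq m$. The trick is to regroup the off-diagonal part as $\sum_l (\id - \Pi_l)\,\B\,\Pi_l$ and to estimate each summand in terms of $\lVert \Pi_l v\rVert$ rather than $\lVert v\rVert$, so that Cauchy--Schwarz recovers the $\sqrt{L}$ gain and the block count is safely absorbed into the dominant $d^2$ scaling.
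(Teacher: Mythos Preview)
Your proposal is correct and follows essentially the same route as the paper: bound $\lVert[\Pi_l,\B]\rVert_{\op}$ via the polynomial $P_l$, reduce to Lemma~\ref{lem:ApproximateTsirelsonSimple} on each block, and reassemble the block-diagonal approximant. Your intermediate estimates are in fact slightly sharper than the paper's---you exploit $[\A,\Pi_l]=0$ in the commutator identity, use block orthogonality to take a maximum rather than a sum for the diagonal part, and extract $\sqrt{L}$ (rather than the paper's $L(L-1)$) for the off-diagonal remainder via Cauchy--Schwarz---but these refinements are absorbed into the same final $d^2$ bound and do not change the argument's structure.
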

\begin{proof}
    First, we wish to apply Lem.~\ref{lem:ApproximateTsirelsonSimple} to each $\Pi_l \A \Pi_l \in \cA_l$ and the corresponding $\Pi_l \B \Pi_l$.
    To this end, note that, by straightforward calculations, one has
    \begin{align*}
        \lVert [\B, \Pi_l] \rVert_{\op} \leq c_1 c_2 c_3 \lVert [\A, \B] \rVert_{\op} \leq c_1 c_2 c_3 \epsilon
    \end{align*}
    and
    \begin{align*}
        [ \Pi_l \A \Pi_l, \Pi_l \B \Pi_l] = \Pi_l [\Pi_l, B] \A \Pi_l + \Pi_l [\A, \B] \Pi_l + \Pi_l \A [\Pi, \B] \Pi_l.
    \end{align*}
    It follows from $\lVert \A \rVert_{\op}, \lVert \Pi_l \rVert_{\op} \leq 1$ and the Cauchy-Schwarz inequality that
    \begin{align*}
        \lVert [ \Pi_l \A \Pi_l, \Pi_l \B \Pi_l] \rVert_{\op} \leq \lVert [\Pi_l, B]  \rVert_{\op} + \lVert [\A, \B] \rVert_{\op}+ \lVert [\Pi, \B] \rVert_{\op} \leq (2 c_1 c_2 c_3 + 1) \epsilon.
    \end{align*}
    Therefore, by Lem.~\ref{lem:ApproximateTsirelsonSimple}, there exists for each $l$ some positive semidefinite $\B^l \in B(\cH_B^l)$ such that
    \begin{align*}
        \lVert \Pi_l \B \Pi_l - \id_A^l \otimes \B^l \rVert_{\op} \leq c_1 c_2 c_3 (2 c_1 c_2 c_3 + 1) d^2_l \epsilon,
    \end{align*}
    where $d_l = \dim(\cH_A^l \otimes \cH_B^l)$.

    Now, $\B$ does not admit the same direct decomposition as $\cA$ due to $[\Pi_l, \B] \neq 0$. 
    Thus, we need to also estimate
    \begin{align*}
        \lVert \B - \sum_l \Pi_l \B \Pi_l \rVert_{\op} & \leq \lVert \sum_{l, l'} \Pi_l \B \Pi_{l'} - \sum_l \Pi_l \B \Pi_l \rVert_{\op} \\
        & \leq \lVert \sum_{l \neq l'} \Pi_l \B \Pi_{l'} \rVert_{\op} \\
        & \leq \sum_{l \neq l'} \lVert  \Pi_l \Pi_{l'} \B + \Pi_l [\B, \Pi_{l'}] \rVert_{\op} \\
        & \leq \sum_{l \neq l'} \lVert \Pi_l \rVert_{\op} \lVert [\B, \Pi_{l'}] \rVert_{\op} \leq L(L-1) c_1 c_2 c_3 \epsilon,
    \end{align*}
    where the completeness and orthogonality of $\Pi_l$ are used.
    
    Finally, one sees that
    \begin{align*}
        \lVert \B - \bigoplus_{l=1}^L \id_A^l \otimes \B^l \rVert_{\op} \leq \lVert \B - \sum_{l=1}^L \Pi_l \B \Pi_l \rVert_{\op} + \lVert \sum_{l=1}^L\Pi_l \B \Pi_l - \bigoplus_{l=1}^L\id_A^l \otimes \B^l \rVert_{\op} \\
         \leq L(L-1) c_1 c_2 c_3 \epsilon + \sum_{l=1}^L c_1 c_2 c_3 (2 c_1 c_2 c_3 +1) d^2_l \epsilon \leq  c_1 c_2 c_3 \left( L(L-1) + (2 c_1 c_2 c_3 + 1)d^2 \right) \epsilon.
    \end{align*}
    Note $L \leq d$, so we are done by defining $\B' := \bigoplus_{l=1}^L \id_A^l \otimes \B^l$.
\end{proof}

While all results in this section are formulated in terms of the Weyl-Heisenberg (clock-and-shift) matrices $\Sigma_1, \Sigma_3$, we observe that any full generating set (e.g., the matrix units $E_{kl} = \ketbra{k}{l}$) would still work.
We finish the section with a remark on the factors $c_1, c_2, c_3$.

\begin{remark}\label{rem:ScalingFactors}
    Note that the bound Eq.~\eqref{eq:ApproximateTsirelsonBound} has an $O(d^2\epsilon)$ scaling for fixed $c_1, c_2, c_3$.
    We finish the section with some comments on these factors $c_1, c_2, c_3$, 
    which are generally example-specific.
    \begin{enumerate}
        \item The generating polynomial degree $c_2$ is related to the length of algebras with known dependence on the dimension $d$.
        The conjectured bound is $O(d)$ according to~\cite{paz1984application}, while the best proven bound is $O(d \log(d))$ due to~\cite{shitov2019improved}.
        The bound is $O(\log d)$ when a ``generic'' assumption is met as detailed in~\cite{klep2016sweeping}.
        
        \item The number of terms in generating polynomials $c_3$ is related to $c_2$.
        In the worst case scenario, $c_3$ is the number of possible monomials of $\{ \A \}$ up to degree $c_2$, which grows exponentially in $c_2$ (i.e. $c_3 \leq \sum_{k=0}^{c_{2}} ( \lvert\{ \A \}\rvert )^{k}$).
        
        \item The coefficient magnitude $c_1$ can be challenging to bound generally.
        While specific algebraic structures might lead to large $c_1$, work by~\cite{pascoe2019elementary} offers a systematic approach.
        It involves constructing a matrix $P$ from the POVM generators, whose properties (e.g., its singular values or entry magnitudes) can serve as an indicator for the likely behavior of $c_1$.
    \end{enumerate}
\end{remark}

\section{Haar-random single-qubit unitary formulation}\label{sec:HaarRandomUnitarySection}
In this section, we present our second main result, a probabilistic version of Tsirelson's theorem in Thm.~\ref{thm:ApproximateTsirelsonProbabilistic}.
Since this will have 
a matrix generator independent formulation, we start by looking for a uniform version of Schur's lemma.
Such a result is given by Stampfli~\cite[Thm.~4 \& Cor.~1]{stampfli1970norm}.
\begin{theorem}[Stampfli's theorem]\label{thm:StampfliTheorem}
    Let $C \in B(\cH)$ for some Hilbert space $\cH$ of possibly infinite dimensions.
    Then
    \begin{align}
        \sup_{B \in B(\cH):\, \lVert B \rVert_{\op} = 1} \lVert [C, B] \rVert_{\op} = \inf_{c \in \mathds{C}} 2 \lVert C - c \id \rVert_{\op}.
    \end{align}
    Moreover, if  $C$ is a normal operator, then
    \begin{align}
        \inf_{c \in \mathds{C}} \lVert C - c \id \rVert_{\op} = R(\sigma(C)),
    \end{align}
    where $R(\sigma(C))$ is the radius of the minimum enclosing disk of the compact set $\sigma(C) \subset \mathds{C}$.
    Note that $R(\sigma(C))$ is not the same as the spectral radius of $C$.
\end{theorem}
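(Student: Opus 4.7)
The plan is to prove the two inequalities separately: $\sup \leq \inf$ is immediate, while $\sup \geq \inf$ contains the substance of the theorem. For the easy direction, note that $[C, B] = [C - c\id, B]$ for every $c \in \mathds{C}$, so for any $B$ with $\lVert B \rVert_{\op} = 1$, submultiplicativity gives $\lVert [C, B] \rVert_{\op} \leq 2 \lVert C - c\id \rVert_{\op}$. Taking the supremum over $B$ and the infimum over $c$ yields the inequality.

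For the reverse direction, I would first guarantee that the infimum is attained. The function $f(c) := \lVert C - c\id \rVert_{\op}$ is convex (a supremum of affine-in-$c$ functions), continuous, and coercive in the sense that $f(c) \geq |c| - \lVert C \rVert_{\op}$, so it achieves its minimum at some $c_0 \in \mathds{C}$. Shifting $C$ by a scalar does not change any commutator, so I may replace $C$ with $C - c_0\id$ and assume $c_0 = 0$, whence $\alpha := \lVert C \rVert_{\op} = \inf_c \lVert C - c\id \rVert_{\op}$. It remains to exhibit unit-norm $B_n$ with $\lVert [C, B_n]\rVert_{\op} \to 2\alpha$.

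The structural input driving the construction is a characterisation of the optimality of $c_0 = 0$: there exist sequences of unit vectors $\{u_n\}, \{v_n\}$ with $\lVert Cu_n\rVert \to \alpha$, $\lVert C^* v_n\rVert \to \alpha$, singular-pair alignment $Cu_n/\lVert Cu_n\rVert \to v_n$ and $C^*v_n/\lVert C^*v_n\rVert \to u_n$, and $\langle u_n, v_n\rangle \to 0$. I would prove this by contrapositive via a first-order perturbation: if the balance failed along some unit direction $\lambda \in \mathds{C}$, then for near-norm-attaining $u$ the expansion $\lVert (C - t\lambda\id)u\rVert^2 = \lVert Cu\rVert^2 - 2t\,\mathrm{Re}(\bar\lambda\langle Cu, u\rangle) + t^2$ would allow one to decrease $f$ by moving $c_0$ infinitesimally along $\lambda$, contradicting the optimality of $c_0 = 0$. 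From such sequences I would set $B_n$ proportional to $\ketbra{u_n}{v_n} + \ketbra{v_n}{u_n}$; its norm tends to $1$ because $u_n$ and $v_n$ are asymptotically orthogonal, and a direct two-dimensional computation on $\mathrm{span}(u_n, v_n)$ using the singular-pair relations shows that $[C, B_n]$ tends to $2\alpha (\ketbra{v_n}{v_n} - \ketbra{u_n}{u_n})$, which has operator norm $2\alpha$.

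The normal case is immediate: if $C$ is normal then so is $C - c\id$ for every $c \in \mathds{C}$, and the operator norm of a normal operator equals its spectral radius. Hence $\lVert C - c\id\rVert_{\op} = \sup_{z \in \sigma(C)}|z - c|$, and minimising over $c$ yields $R(\sigma(C))$ by the very definition of the minimum enclosing disk. The main obstacle is the characterisation step in the previous paragraph, which requires a careful geometric argument coupling the numerical ranges of $C$ and $C^*$; in possibly infinite dimensions the non-attainment of suprema forces every estimate to be carried out approximately along sequences rather than at genuine extremal vectors.
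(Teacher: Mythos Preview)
The paper does not prove this theorem at all: it is quoted verbatim from Stampfli's original article \cite[Thm.~4 \& Cor.~1]{stampfli1970norm} and used as a black box in the subsequent probabilistic generalizations. So there is no ``paper's own proof'' to compare against; your proposal is a sketch of Stampfli's original argument rather than of anything in this paper.

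That said, your outline is essentially the standard route. The easy inequality and the normal-operator addendum are correct as written. For the hard inequality, your reduction to $c_0=0$ via convexity and coercivity is fine, and the idea of building near-optimal $B_n$ as rank-two operators from approximate top singular pairs of $C$ is exactly Stampfli's construction. The point you flag as the ``main obstacle''---that optimality of $c_0=0$ forces the existence of approximate singular pairs $(u_n,v_n)$ with $\langle u_n,v_n\rangle\to 0$---is indeed where all the work lies, and your first-order perturbation sketch is a bit too quick: the contrapositive you describe only rules out a \emph{uniform} bias of $\langle Cu,u\rangle$ over near-extremal $u$, but what is actually needed is a more delicate statement balancing contributions from two (or three) different directions in the numerical range, which is how Stampfli extracts asymptotically orthogonal pairs. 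Without that refinement the argument does not close, so as written this is a plausible plan with the key lemma still unproven rather than a complete proof.
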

In particular, if $\lVert [C, B] \rVert_{\op} \leq \epsilon$ for all $B$ of norm $1$, then $C$ is $\epsilon/2$-close to some scalar operator $c\id$ in operator norm.
Note that $\lVert [C, U] \rVert_{\op} \leq \epsilon$ for all unitaries $U$ is equivalent to the assumption that $\lVert [C, B] \rVert_{\op} \leq \epsilon$ for all $B$ of norm $1$ due to the Russo-Dye theorem~\cite[Cor.~II.3.2.15]{blackadar2006operator}.

While mathematically pleasing, Stampfli's premise is too strong in physical scenarios, since testing commutators with all operators $B$ satisfying $\lVert B \rVert_{\op} = 1$ would require probing an uncountable family of observables.
Therefore, in this section, we revisit Stampfli's theorem through a physically motivated probabilistic approach---the Haar-random single-qubit unitary formulation.
Here, ``single-qubit'' in physics jargon refers to $\mathds{C}^2$-subspaces (more generally, ``qudits'' are $\mathds{C}^d$ for any $d \geq 2$).

We first show that the demanding requirement ``$C$ almost commutes with \emph{every} unitary'' can be relaxed to ``$C$ almost commutes with \emph{most} single-qubit unitaries taken at random'', yielding our probabilistic Stampfli theorem (Thm.~\ref{thm:ProbabilisticStampfli}).
Because the Haar measure is unavailable in infinite dimensions, the randomization is implemented by sampling Haar-random unitaries inside every two-dimensional subspace.
However, checking every two-dimensional subspace in an infinite-dimensional space is still unrealistically demanding.

Hence, we then push the idea further: by also randomizing these two-dimensional subspaces, we obtain a doubly probabilistic Stampfli's theorem (Thm.~\ref{thm:DoublyProbabilisticStampfli}) that is better aligned with realistic experiments.
Though, due to the technicality of randomization over subspaces, our result necessarily restricts to finite dimensions.
Finally, we develop another approximate Tsirelson's theorem (Thm.~\ref{thm:ApproximateTsirelsonProbabilistic}) based on this doubly probabilistic Haar-random single-qubit unitary formulation.

We observe that the above results can also be formulated with $d$-dimensional subspaces for arbitrary $d$ (Rem.~\ref{rem:CommentOnKisDimofd}).

\subsection{Probabilistic Stampfli's theorem}
The first probabilistic relaxation of commutation can be written as follows: for Haar-random unitaries $U$, there are $\epsilon, \delta >0$, such that the probability of having a small commutator ($\leq \epsilon$) with $U$ is high ($\geq 1-\delta$).

For notational convenience, from now on denote by $\mathrm{Gr}(2, \cH)$ the Grassmannian of $\cH$, this is a manifold whose elements are exactly two-dimensional subspaces $\cK \subset \cH$.
Let $\mu_{\cK}$ denote the Haar probability measure on the unitary group $U(\cK)$ in $B(\cK)$.

We first consider self-adjoint operators, and the general case follows from the standard decomposition into real plus imaginary parts.
\begin{lemma}\label{lem:ProbabilisticStampfliNormal}
    Let $\cH$ be a Hilbert space of possibly infinite dimension, and let $C \in B(\cH)$ be a self-adjoint operator.
    Given $\epsilon > 0$ and $\delta \in [0, 1]$, suppose that
    \begin{align}\label{eq:ProbabilisticStampfliAssumption}
        \Pr_{U  \sim \mu_{\cK} }\{ \lVert [P_{\cK} C P_{\cK}, U] \rVert_{\op} \leq \epsilon \} \geq 1 - \delta
    \end{align}
    for every subspace $\cK \in \mathrm{Gr}(2, \cH)$ with projector $P_{\cK}$.
    Then
    \begin{align}
        \inf_{c \in \mathds{C}} \lVert C - c \id_{\cH} \rVert_{\op} \leq \min \left( \frac{\sqrt{2}}{2} \left( \sqrt{1-\delta}\, \epsilon + 2\sqrt{\delta}\lVert C \rVert_{\op} \right),\, \left\lceil{\frac{1}{1-\delta}}\right\rceil \frac{\epsilon}{2} \right).
    \end{align}
\end{lemma}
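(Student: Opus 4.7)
The plan is to reduce the bound on $\inf_{c\in\mathds{C}}\lVert C-c\id\rVert_{\op}$ to a two-dimensional computation on a carefully chosen subspace $\cK$, and then extract the bound by integrating the commutator norm over the Haar measure on $U(\cK)$. Since $C$ is self-adjoint, the optimal $c$ is real and equals $(M+m)/2$ with $M=\sup\sigma(C)$ and $m=\inf\sigma(C)$, so that $\inf_c\lVert C-c\id\rVert_{\op}=(M-m)/2$. Assuming $M>m$ (otherwise $C$ is a scalar multiple of the identity and the statement is trivial), for any $\eta>0$ smaller than $(M-m)/2$ I would invoke the spectral theorem of $C$ to choose orthogonal unit vectors $\phi_M,\phi_m$ from the ranges of the spectral projections of disjoint intervals around $M$ and $m$, ensuring $\sandwich{\phi_M}{C}{\phi_M}\geq M-\eta$ and $\sandwich{\phi_m}{C}{\phi_m}\leq m+\eta$. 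Setting $\cK=\Span{\phi_M,\phi_m}$ and diagonalizing the compression $P_\cK C P_\cK|_{\cK}$ as $\mathrm{diag}(\lambda_1,\lambda_2)$, the fact that the eigenvalues of a Hermitian $2\times 2$ matrix dominate/minorize its diagonal entries yields $|\lambda_1-\lambda_2|\geq|\sandwich{\phi_M}{C}{\phi_M}-\sandwich{\phi_m}{C}{\phi_m}|\geq M-m-2\eta$.

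Next I would compute the commutator explicitly on $\cK\cong\mathds{C}^2$. In the basis diagonalizing the compression, for any $U\in U(\cK)$ a direct calculation gives
\begin{align*}
    [\mathrm{diag}(\lambda_1,\lambda_2),\,U]=\begin{pmatrix} 0 & (\lambda_1-\lambda_2)u_{12} \\ (\lambda_2-\lambda_1)u_{21} & 0 \end{pmatrix},
\end{align*}
and unitarity forces $|u_{12}|=|u_{21}|$, whence $\lVert[P_\cK C P_\cK,U]\rVert_{\op}=|\lambda_1-\lambda_2|\cdot|u_{12}|$. The key probabilistic input is the standard fact that under Haar measure on $U(2)$ the random variable $|u_{12}|^2$ is uniform on $[0,1]$, so $\mathbb{E}_{U\sim\mu_\cK}[|u_{12}|^2]=1/2$, and therefore
\begin{align*}
    \mathbb{E}_{U\sim\mu_\cK}\lVert[P_\cK C P_\cK,U]\rVert_{\op}^2=\frac{|\lambda_1-\lambda_2|^2}{2}.
\end{align*}

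On the other hand, the hypothesis \eqref{eq:ProbabilisticStampfliAssumption} lets me split the same expectation: on an event of Haar measure at least $1-\delta$ the integrand is bounded by $\epsilon^2$, while on the complement the deterministic estimate $\lVert[P_\cK C P_\cK,U]\rVert_{\op}\leq 2\lVert P_\cK C P_\cK\rVert_{\op}\leq 2\lVert C\rVert_{\op}$ applies. Chaining these gives $|\lambda_1-\lambda_2|^2/2\leq (1-\delta)\epsilon^2+4\delta\lVert C\rVert_{\op}^2$, and the elementary inequality $\sqrt{x+y}\leq\sqrt{x}+\sqrt{y}$ yields
\begin{align*}
    |\lambda_1-\lambda_2|\leq\sqrt{2}\bigl(\sqrt{1-\delta}\,\epsilon+2\sqrt{\delta}\lVert C\rVert_{\op}\bigr).
\end{align*}
Combining with the first paragraph and sending $\eta\to 0$ bounds $M-m$ by the same quantity, which is twice the claimed bound on $(M-m)/2=\inf_c\lVert C-c\id\rVert_{\op}$.

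The main obstacle I foresee is the spectral-theoretic construction in the first step, since in infinite dimensions the extreme spectral values $M,m$ need not be attained as eigenvalues; one must work with approximate spectral vectors drawn from the projection-valued measure and guarantee their orthogonality by separating the supporting intervals. The remaining ingredients---the $U(2)$ commutator formula, the uniformity of $|u_{12}|^2$ under Haar measure, and the second-moment split---are straightforward, but it is essential to work with the second moment rather than the first moment of $\lVert[P_\cK C P_\cK,U]\rVert_{\op}$, as this is precisely what produces both the factor $\tfrac{\sqrt{2}}{2}$ and the square-root weights $\sqrt{1-\delta},\sqrt{\delta}$ that appear in the stated bound.
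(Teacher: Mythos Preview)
Your proposal is correct and follows essentially the same approach as the paper: choose a two-dimensional subspace $\cK$ spanned by approximate extremal spectral vectors, compute the second moment of $\lVert[P_\cK C P_\cK,U]\rVert_{\op}$ over the Haar measure on $U(\cK)$, and combine the resulting lower bound with the conditional upper bound coming from the hypothesis. Your explicit commutator computation yielding the exact equality $\mathbb{E}\lVert[P_\cK C P_\cK,U]\rVert_{\op}^2=|\lambda_1-\lambda_2|^2/2$ (via the uniformity of $|u_{12}|^2$ on $[0,1]$) is in fact slightly sharper than the paper's version, which only records the inequality $\mathbb{E}\lVert[C_\cK,U]\rVert_{\op}^2\geq 2R(\sigma(C_\cK))^2$ by bounding the operator norm below by a vector norm; the paper also routes the identity $\inf_c\lVert C-c\id\rVert_{\op}=(M-m)/2$ through Stampfli's theorem rather than stating it directly as you do, but this is purely cosmetic.
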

Note that the bound $\lceil{\frac{1}{1-\delta}}\rceil \frac{\epsilon}{2}$ generally behaves better when $\delta$ is small (e.g., it reduces to $\epsilon$ when $\delta \leq 1/2$), while the bound $\frac{\sqrt{2}}{2} \left( \sqrt{1-\delta}\, \epsilon + 2\sqrt{\delta}\lVert C \rVert_{\op} \right)$ is more stable and does not blow up as $\delta \to 1$.

\begin{proof}
    Let us first derive the bound $\frac{\sqrt{2}}{2} \left( \sqrt{1-\delta}\, \epsilon + 2\sqrt{\delta}\lVert C \rVert_{\op} \right)$.
    The central object to bound is $\mathds{E} \lVert [P_{\cK} C P_{\cK}, U] \rVert_{\op}^2$---while the upper bound is straightforward, the lower bound requires more work.
    The main idea is to identify $\cK = \Span{\ket{\psi_1}, \ket{\psi_2}}$.
    Up to infinite-dimensional subtlety, the vector $\ket{\psi_1}$ (resp. $\ket{\psi_2}$) is chosen to approximate the ``eigenvector'' of $C$ associated with the minimal (resp. maximal) ``eigenvalue''.
    On this $\cK$, one can then lower bound $\lVert [P_{\cK} C P_{\cK}, U] \rVert_{\op}$ by $2\inf_{c \in \mathds{C}} \lVert C - c \id_{\cH} \rVert_{\op}^2$ using the radius of spectrum of $C$, and then apply Stampfli's Theorem~\ref{thm:StampfliTheorem}.

    We begin with the spectral theorem for the bounded self-adjoint operator $C$~\cite[Ch.~I.6.1]{blackadar2006operator}: there exists a unique projective-valued measure $E$ such that
    \begin{align*}
        C = \int_{\sigma(C)} \lambda \, dE(\lambda).
    \end{align*}
    Note that the spectrum $\sigma(C)$ satisfies
    \begin{align*}
        \sigma(C) \subset \left[\inf_{\lVert \psi \rVert_2 = 1}\sandwich{\psi}{C}{\psi} , \sup_{\lVert \psi \rVert_2 = 1}\sandwich{\psi}{C}{\psi} \right] := [\lambda_{\min}, \lambda_{\max}].
    \end{align*}
    We can check in this case that $R(\sigma(C)) = \frac{1}{2}(\lambda_{\max} - \lambda_{\min})$.

    If $\lambda_{\min} = \lambda_{\max}$, then $C$ is automatically a scalar operator and the conclusion is trivial.
    Otherwise, given an $\eta > 0$, we may consider intervals $I_1, I_2 \subset \sigma(C)$ such that
    \begin{align*}
        I_1 = [\lambda_{\min}, \lambda_{\min}+\eta] \cap \sigma(C), \, I_2 = [\lambda_{\max} - \eta, \lambda_{\max}] \cap \sigma(C)
    \end{align*}
    with the corresponding spectral projections $E(I_1), E(I_2)$.
    Fix two unit vectors, $\ket{\psi_1} \in E(I_1)\cH$ and $\ket{\psi_2} \in E(I_2)\cH$. Direct calculation shows
    \begin{align*}
        \sandwich{\psi_1}{(C-\lambda_{\min}\id_{\cH})}{\psi_1} &= \int_{\sigma(C)}(\lambda - \lambda_{\min}) \, d\mu_1(\lambda) = \int_{I_1}(\lambda - \lambda_{\min}) \, d\mu(\lambda) \leq \eta^2, \\
        \sandwich{\psi_2}{(\lambda_{\max}\id_{\cH} - C)}{\psi_2} &= \int_{\sigma(C)}(\lambda_{\max} - \lambda) \, d\mu_2(\lambda) = \int_{I_2}(\lambda_{\max} - \lambda) \, d\mu(\lambda) \leq \eta^2,
    \end{align*}
    where the measures $\mu_i(X) = \sandwich{\psi_i}{E(X)}{\psi_i}$ are supported on $I_i$ for $i=1, 2$.

    Subsequently, we identify $\cK = \Span{\ket{\psi_1}, \ket{\psi_2}} \in \mathrm{Gr}(2, \cH)$ with projector $P_{\cK}$.
    Clearly $C_{\cK} = P_{\cK} C P_{\cK} \in B(\cK)$ is a two-dimensional self-adjoint operator, so we denote its two eigenvalues by $\mu_{\min}, \mu_{\max}$, and $R(\sigma(C_{\cK})) = 1/2(\mu_{\max}- \mu_{\min})$.
    Then, the above two inequalities show that
    \begin{align*}
        \mu_{\min} \leq \sandwich{\psi_1}{C_{\cK}}{\psi_1} &= \sandwich{\psi_1}{C}{\psi_1} \leq \lambda_{\min} + \eta^2 \\
        \mu_{\max} \geq \sandwich{\psi_2}{C_{\cK}}{\psi_2} &= \sandwich{\psi_2}{C}{\psi_2} \geq \lambda_{\max} - \eta^2.
    \end{align*}
    By Stampfli's Theorem~\ref{thm:StampfliTheorem},
    \begin{align*}
        \inf_{c \in \mathds{C}} \lVert C - c \id_{\cH} \rVert_{\op} = R(\sigma(C)) = \frac{1}{2}(\lambda_{\max} - \lambda_{\min}) \leq \frac{1}{2}(\mu_{\max} - \mu_{\min}) + \eta^2 = \inf_{c \in \mathds{C}} \lVert C_{\cK} - c \id_{\cK} \rVert_{\op} + \eta^2.
    \end{align*}
    
    To upper bound $\inf_{c \in \mathds{C}} \lVert C_{\cK} - c \id_{\cK} \rVert_{\op}$, we work with the eigenbasis $\{ \ket{\mu_{\min}}, \ket{\mu_{\max}} \}$ of $C_{\cK}$ associated with $\{\mu_{\min}, \mu_{\max}\}$.
    In this basis, every unitary $U \in B(\cK)$ satisfies $\lVert U\ket{\mu_{\min}} \rVert_2^2 = \lvert U_{11} \rvert^2 + \lvert U_{21} \rvert^2 = 1$.
    Moreover, if $U$ is also Haar-random, then the two random variables $\lvert U_{11} \rvert^2$ and $\lvert U_{21} \rvert^2$ are identically distributed.
    By symmetry it follows that the expectation values $\mathds{E} \lvert U_{11} \rvert^2 = \mathds{E} \lvert U_{21} \rvert^2 = \frac12$.
    Then, one checks that
    \begin{align*}
        \mathds{E} \lVert [C_{\cK}, U] \rVert_{\op}^2 &\geq \mathds{E} \lVert (C_{\cK}U - UC_{\cK}) \ket{\mu_{\min}} \rVert_{2}^2 \\
        &= \mathds{E} \lVert C_{\cK}U \ket{\mu_{\min}} - U \mu_{\min} \ket{\mu_{\min}} \rVert_{2}^2\\
        &= \mathds{E} \lVert (C_{\cK} - \mu_{\min} \id_{\cK}) U\ket{\mu_{\min}} \rVert_{2}^2 \\
        &= \mathds{E} \left( \lvert U_{11} \rvert^2 (\mu_{\min} - \mu_{\min})^2 + \lvert U_{21} \rvert^2 (\mu_{\max} - \mu_{\min})^2 \right) \\
        &= 2 \frac{1}{4} (\mu_{\max} - \mu_{\min})^2 \\
        &= 2 R(\sigma(C_{\cK}))^2 = 2\inf_{c \in \mathds{C}} \lVert C_{\cK} - c \id_{\cK} \rVert_{\op}^2 \geq 2\inf_{c \in \mathds{C}} \lVert C - c \id_{\cH} \rVert_{\op}^2 - \eta^2.
    \end{align*}
    Since $\eta > 0$ is arbitrary, it follows that
    \begin{equation}\label{eq:ProbabilisticStampfliIntermediateLowerBound}
        \begin{aligned}
            \inf_{c \in \mathds{C}} \lVert C - c \id_{\cH} \rVert_{\op} \leq \inf_{c \in \mathds{C}} \lVert C_{\cK} - c \id_{\cK} \rVert_{\op} = R(\sigma(C_{\cK})), \\
            \text{and} \quad 2\inf_{c \in \mathds{C}} \lVert C - c \id_{\cH} \rVert_{\op}^2 \leq \mathds{E} \lVert [C_{\cK}, U] \rVert_{\op}^2,
        \end{aligned}
    \end{equation}
    using the previous inequality, which will also be used to derive the other upper bound.

    On the other hand, Eq.~\eqref{eq:ProbabilisticStampfliAssumption} is equivalent to $\Pr_{U \sim \mu_{\cK}}\{ \lVert [C, U] \rVert_{\op}^2 \leq \epsilon^2 \} \geq 1 - \delta$, which implies that
    \begin{align*}
        \mathds{E} \lVert [C_{\cK}, U] \rVert_{\op}^2 &= \Pr_{U \sim \mu_{\cK}}\{ \lVert [C_{\cK}, U] \rVert_{\op}^2 \leq \epsilon^2 \} \cdot \mathds{E}( \lVert [C_{\cK}, U] \rVert_{\op}^2 \mid \lVert [C_{\cK}, U] \rVert_{\op}^2 \leq \epsilon^2) \\
        &+ \Pr_{U \sim \mu_{\cK}}\{ \lVert [C_{\cK}, U] \rVert_{\op}^2 > \epsilon^2 \} \cdot \mathds{E}( \lVert [C_{\cK}, U] \rVert_{\op}^2 \mid \lVert [C_{\cK}, U] \rVert_{\op}^2 > \epsilon^2) \\
        &\leq (1-\delta) \epsilon^2 + \delta \cdot \lVert [C_{\cK}, U] \rVert_{\op}^2 \leq (1-\delta) \epsilon^2 + \delta \cdot 4 \lVert C \rVert_{\op}^2.
    \end{align*}
    The first inequality is justified due to reducing the weight of the smaller conditional expectation ($\leq \epsilon^2)$ while increasing the weight of the larger one ($\geq \epsilon^2$) can only enlarge the total, and the second one is a basic calculation.
    It follows from the lower bound Eq.~\eqref{eq:ProbabilisticStampfliIntermediateLowerBound} that
    \begin{align*}
         \inf_{c \in \mathds{C}} \lVert C - c \id_{\cH} \rVert_{\op} \leq \frac{1}{\sqrt{2}} \mathds{E} \lVert [C_{\cK}, U] \rVert_{\op}  \leq  \frac{\sqrt{2}}{2} \left( \sqrt{1-\delta}\, \epsilon + 2\sqrt{\delta}\lVert C \rVert_{\op} \right).
    \end{align*}

    We now derive the upper bound $\lceil{\frac{1}{1-\delta}}\rceil \frac{\epsilon}{2}$ with a Steinhaus-Weil-like argument.
    Let $A_{\cK} = \{U \in U(\cK) \mid \lVert [P_{\cK} C P_{\cK}, U] \rVert_{\op} \leq \epsilon \}$, then Eq.~\eqref{eq:ProbabilisticStampfliAssumption} implies that $\mu_{\cK}(A) \geq 1 -\delta$.
    Also, $A_{\cK}^{-1} = A_{\cK}^* = A_{\cK}$, since for any $U \in A_{\cK}$ one has that
    \begin{align*}
        \lVert [P_{\cK} C P_{\cK}, U^{-1}] \rVert_{\op} = \lVert U [P_{\cK} C P_{\cK}, U^*] U \rVert_{\op} = \lVert [P_{\cK} C P_{\cK}, U] \rVert_{\op}.
    \end{align*}

    Let $l = \lceil{\frac{1}{1-\delta}}\rceil$, we claim that $A_{\cK}^l = U(\cK)$.
    Indeed, since $U(\cK)$ is compact and connected, inductive application of Kemperman's theorem~\cite[Thm.~1.1]{Kemperman1964} implies that
    \begin{align*}
        \mu_{\cK}(A_\cK^l) \geq \min(1, l \cdot \mu_{\cK}(A_\cK)) \geq \lceil{\frac{1}{1-\delta}}\rceil (1-\delta) \geq 1 = \mu_{\cK}(U(\cK)),
    \end{align*}
    consequently $A_{\cK}^l = U(\cK)$.

    Therefore, every $V \in U(\cK)$ is of the form $V = \prod_{i=1}^l U_i$ for some $U_i \in A_{\cK}$.
    By the definition of $A_{\cK}$ and an inductive application of Eq.~\eqref{eq:ProductCommutatorFormula}, we have
    \begin{align}\label{eq:KempermanBoundForAllU}
        \lVert [C_{\cK}, V] \rVert_{\op} = \lVert [C_{\cK}, \prod_{i=1}^l U_i] \rVert_{\op}  \leq l\epsilon = \lceil{\frac{1}{1-\delta}}\rceil \epsilon.
    \end{align}
    But then, by the original Stampfli's Theorem~\ref{thm:StampfliTheorem} and Eq.~\eqref{eq:ProbabilisticStampfliIntermediateLowerBound}, we conclude
    \begin{align*}
        \inf_{c \in \mathds{C}} \lVert C - c \id_{\cH} \rVert_{\op} \leq  \inf_{c \in \mathds{C}} \lVert C_{\cK} - c \id_{\cK} \rVert_{\op} = \frac{1}{2} \sup_{V \in U(\cK)} \lVert [P_{\cK} C P_{\cK}, V] \rVert_{\op} \leq \lceil{\frac{1}{1-\delta}}\rceil \frac{\epsilon}{2}.
    \end{align*}
\end{proof}

\begin{theorem}[Probabilistic Stampfli's theorem]\label{thm:ProbabilisticStampfli}
    Let $\cH$ be a Hilbert space of possibly infinite dimension and let $C \in B(\cH)$.
    Given $\epsilon > 0$ and $\delta \in [0, 1]$, suppose that
    \begin{align}
        \Pr_{U  \sim \mu_{\cK} }\{ \lVert [P_{\cK} C P_{\cK}, U] \rVert_{\op} \leq \epsilon \} \geq 1 - \delta
    \end{align}
    for every subspace $\cK \in \mathrm{Gr}(2, \cH)$ with projector $P_{\cK}$.
    Then
    \begin{align}
        \inf_{c \in \mathds{C}} \lVert C - c \id_{\cH} \rVert_{\op} \leq \min \left( \sqrt{2} \left( \sqrt{1-\delta}\, \epsilon + 2\sqrt{\delta}\lVert C \rVert_{\op} \right),\, \lceil{\frac{1}{1-\delta}}\rceil \frac{\epsilon}{2} \right).
    \end{align}
\end{theorem}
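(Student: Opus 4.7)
The plan is to reduce the general case to the self-adjoint case (Lem.~\ref{lem:ProbabilisticStampfliNormal}) via the Cartesian decomposition $C = X + iY$, where $X = \frac{1}{2}(C+C^*)$ and $Y = \frac{1}{2i}(C-C^*)$ are self-adjoint with $\lVert X \rVert_{\op}, \lVert Y \rVert_{\op} \leq \lVert C \rVert_{\op}$. Since $P_\cK$ is self-adjoint, one has $P_\cK X P_\cK = \frac{1}{2}(C_\cK + C_\cK^*)$ and analogously for $Y$, where $C_\cK := P_\cK C P_\cK$. I will derive lemma-type bounds on $\inf_{a \in \mathds{R}} \lVert X - a\id \rVert_{\op}$ and $\inf_{b \in \mathds{R}} \lVert Y - b\id \rVert_{\op}$, then combine them by the triangle inequality with $c := a^* + i b^*$ built from the respective optimal real scalars.

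The main technical step is transferring the probabilistic hypothesis on $C$ to a second-moment bound on $[X_\cK, U]$ (and $[Y_\cK, U]$) without losing probability budget. First, $[C_\cK^*, U] = -([C_\cK, U^*])^*$, so $\lVert [C_\cK^*, U] \rVert_{\op} = \lVert [C_\cK, U^*] \rVert_{\op}$; combined with invariance of the Haar measure $\mu_\cK$ under $U \mapsto U^*$, this yields $\mathds{E}_U \lVert [C_\cK^*, U] \rVert_{\op}^2 = \mathds{E}_U \lVert [C_\cK, U] \rVert_{\op}^2$. Applying $(a+b)^2 \leq 2(a^2+b^2)$ to $[X_\cK, U] = \frac{1}{2}([C_\cK, U] + [C_\cK^*, U])$ then gives $\mathds{E}_U \lVert [X_\cK, U] \rVert_{\op}^2 \leq \mathds{E}_U \lVert [C_\cK, U] \rVert_{\op}^2$, and the same for $Y_\cK$. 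Splitting the expectation on the good/bad events via the hypothesis and using the crude bound $\lVert [C_\cK, U] \rVert_{\op} \leq 2 \lVert C \rVert_{\op}$ yields $\mathds{E}_U \lVert [X_\cK, U] \rVert_{\op}^2 \leq (1-\delta)\epsilon^2 + 4\delta \lVert C \rVert_{\op}^2$ for every $\cK \in \mathrm{Gr}(2, \cH)$.

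I would then invoke the core inequality from the proof of Lem.~\ref{lem:ProbabilisticStampfliNormal} (Eq.~\eqref{eq:ProbabilisticStampfliIntermediateLowerBound}): for the self-adjoint operator $X$, choosing $\cK$ spanned by approximate extremal spectral vectors of $X$ gives $2\inf_a \lVert X - a\id \rVert_{\op}^2 \leq \mathds{E}_U \lVert [X_\cK, U] \rVert_{\op}^2$. Chaining with the previous bound and using $\sqrt{a^2+b^2} \leq a + b$ for $a, b \geq 0$ delivers $\inf_a \lVert X - a\id \rVert_{\op} \leq \frac{\sqrt{2}}{2}(\sqrt{1-\delta}\, \epsilon + 2\sqrt{\delta}\, \lVert C \rVert_{\op})$, and the same for $Y$. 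The triangle inequality $\lVert C - c\id \rVert_{\op} \leq \lVert X - a^*\id \rVert_{\op} + \lVert Y - b^*\id \rVert_{\op}$ then produces the claimed bound with the factor $\sqrt{2}$.

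The hard part, and the point that motivates working at the second-moment level rather than applying Lem.~\ref{lem:ProbabilisticStampfliNormal} as a black box, is avoiding a factor-of-two loss in $\delta$. A naive plan that union-bounds the events $\{\lVert [C_\cK, U] \rVert_{\op} \leq \epsilon\}$ and $\{\lVert [C_\cK, U^*] \rVert_{\op} \leq \epsilon\}$ to feed a probabilistic hypothesis for $X_\cK$ into the self-adjoint lemma produces $\sqrt{1-2\delta}$ and $\sqrt{2\delta}$ factors and strictly worse constants. Bounding $\mathds{E}_U \lVert [X_\cK, U] \rVert_{\op}^2$ directly by $\mathds{E}_U \lVert [C_\cK, U] \rVert_{\op}^2$ sidesteps the union bound and preserves the parameters $(\epsilon, \delta)$ exactly, which is what makes the stated constants tight.
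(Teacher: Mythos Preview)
Your proof is correct and follows the same overall structure as the paper's (Cartesian decomposition $C=X+iY$, transfer the hypothesis to the self-adjoint parts, invoke the self-adjoint case, combine via the triangle inequality). The one substantive difference is in how you transfer the hypothesis from $C_\cK$ to $X_\cK$, and here you work harder than necessary.

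You use $\lVert [C_\cK^*,U]\rVert_{\op}=\lVert [C_\cK,U^*]\rVert_{\op}$ together with Haar invariance under $U\mapsto U^*$ to match second moments, then apply $(a+b)^2\le 2(a^2+b^2)$ and reopen the proof of Lem.~\ref{lem:ProbabilisticStampfliNormal} at the level of Eq.~\eqref{eq:ProbabilisticStampfliIntermediateLowerBound}. The paper instead observes that for \emph{unitary} $U$ one has the pointwise identity
\[
U\,[C_\cK,U^*]\,U \;=\; -[C_\cK,U],
\]
whence $\lVert [C_\cK^*,U]\rVert_{\op}=\lVert [C_\cK,U^*]\rVert_{\op}=\lVert [C_\cK,U]\rVert_{\op}$ for every single $U$, and therefore $\lVert [X_\cK,U]\rVert_{\op}\le \lVert [C_\cK,U]\rVert_{\op}$ pointwise. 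This means the event $\{\lVert [C_\cK,U]\rVert_{\op}\le\epsilon\}$ is contained in $\{\lVert [X_\cK,U]\rVert_{\op}\le\epsilon\}$, so the probabilistic hypothesis for $C$ is \emph{already} the hypothesis for $X$ (and $Y$) with the same $(\epsilon,\delta)$, and Lem.~\ref{lem:ProbabilisticStampfliNormal} applies as a black box. Your stated motivation for avoiding the black-box route---the feared union-bound loss in $\delta$---evaporates once you notice that $\{\lVert [C_\cK,U]\rVert_{\op}\le\epsilon\}$ and $\{\lVert [C_\cK,U^*]\rVert_{\op}\le\epsilon\}$ are in fact the same event. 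Your second-moment argument is a valid alternative, just not needed here.
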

Again, the bound $\lceil{\frac{1}{1-\delta}}\rceil \frac{\epsilon}{2}$ behaves generally better in the high confidence regime ($\delta \to 0$) while the bound $\sqrt{2} \left( \sqrt{1-\delta}\, \epsilon + 2\sqrt{\delta}\lVert C \rVert_{\op} \right)$ is more stable in the low confidence regime ($\delta \to 1$).
\begin{proof}
    To derive the first bound $\sqrt{2} \left( \sqrt{1-\delta}\, \epsilon + 2\sqrt{\delta}\lVert C \rVert_{\op} \right)$, let
    \begin{align*}
        H = \frac{1}{2}(C + C^*),\, K = \frac{1}{2i}(C - C^*)
    \end{align*}
    be the unique self-adjoint operators such that $C = H + iK$.
    For any $\cK \in \mathrm{Gr}(2, \cH)$ with projection $P_{\cK}$, write $C_{\cK} = P_{\cK} C P_{\cK}$, $H_{\cK} = P_{\cK} H P_{\cK}$, and $K_{\cK} = P_{\cK} K P_{\cK}$.
    Clearly both $H_{\cK}$ and $K_{\cK}$ are still self-adjoint.
    Pick $x, y$ as minimizers such that
    \begin{align*}
        \lVert H_{\cK} - x \id_{\cK} \rVert_{\op} &= \inf_{c \in \mathds{C}} \lVert H_{\cK} - c \id_{\cK} \rVert_{\op}, \\
        \lVert K_{\cK} - y \id_{\cK} \rVert_{\op} &= \inf_{c \in \mathds{C}} \lVert K_{\cK} - c \id_{\cK} \rVert_{\op}.
    \end{align*}
    (One can check that $x, y$ are actually the average of the largest and smallest eigenvalues of $H_{\cK}, K_{\cK}$.)
    Note that both $\lVert H_{\cK} \rVert_{\op}, \lVert K_{\cK} \rVert_{\op} \leq \lVert C_{\cK} \rVert_{\op} \leq \lVert C \rVert_{\op}$ by the triangle inequality.
    
    Next, for every unitary $U \in B(\cK)$ direct calculation shows that
    \begin{align*}
        \lVert [C_{\cK}^*, U] \rVert_{\op} = \lVert U [C_{\cK}, U^*] U \rVert_{\op} = \lVert [C_{\cK}, U] \rVert_{\op}.
    \end{align*}
    Then
    \begin{align*}
        \lVert [H_{\cK}, U] \rVert_{\op} \leq \frac{1}{2} (\lVert [C_{\cK}, U] \rVert_{\op} + \lVert [C_{\cK}^*, U] \rVert_{\op}) = \lVert [C_{\cK}, U] \rVert_{\op}
    \end{align*}
    and likewise for $\lVert [K_{\cK}, U] \rVert_{\op} \leq \lVert [C_{\cK}, U] \rVert_{\op}$.
    Therefore, for each Haar-random $U \in B(\cK)$ such that $\lVert [C_{\cK}, U] \rVert_{\op} \leq \epsilon$, the same commutator bounds apply to both $H_{\cK}, K_{\cK}$, i.e.
    \begin{align*}
        \Pr_{U \sim \mu_{\cK}}\{ \lVert [H_{\cK}, U] \rVert_{\op} \leq \epsilon \} = \Pr_{U \sim \mu_{\cK}}\{ \lVert [K_{\cK}, U] \rVert_{\op} \leq \epsilon \} \geq 1 - \delta.
    \end{align*}
    It follows from Lem.~\ref{lem:ProbabilisticStampfliNormal} that both
    \begin{align*}
        \lVert H - x \id_{\cK} \rVert_{\op}, \, \lVert K - y \id_{\cK} \rVert_{\op} \leq \frac{\sqrt{2}}{2} \left( \sqrt{1-\delta}\, \epsilon + 2\sqrt{\delta}\lVert C \rVert_{\op} \right),
    \end{align*}
    which implies
    \begin{align*}
        \inf_{c \in \mathds{C}} \lVert C - c \id \rVert_{\op} & \leq \lVert (H + iK) - (x + iy) \id \rVert_{\op} \\
        & \leq \lVert H - x \id \rVert_{\op} + \lVert i(K - y \id) \rVert_{\op} \leq \sqrt{2} \left( \sqrt{1-\delta}\, \epsilon + 2\sqrt{\delta}\lVert C \rVert_{\op} \right).
    \end{align*}

    Finally, the bound $\lceil{\frac{1}{1-\delta}}\rceil \frac{\epsilon}{2}$ directly follows from Lem.~\ref{lem:ProbabilisticStampfliNormal} since the proof does not rely on the self-adjointness.
\end{proof}

Observe that when the commutator is smaller than $\epsilon$ with high confidence $(\delta < 1/2)$, we recover the original Stampfli's Theorem~\ref{thm:StampfliTheorem} up to a factor of $2$ ($\epsilon$ instead of $\epsilon/2$), which shows up to compensate for the fact that we need to consider $A_{\cK}^2$ (from the proof of Lem.~\ref{lem:ProbabilisticStampfliNormal}).

\begin{remark}\label{rem:CommentOnKisDimofd}
    One can generalize the setting of Thm.~\ref{thm:ProbabilisticStampfli} to Haar-random unitaries $U \in B(\cK)$ when $2 \leq \dim(\cK) < \infty$, at the cost of having a slightly worse constant factor:
    \begin{align}\label{eq:KDimofRemark}
        \inf_{c \in \mathds{C}} \lVert C - c \id_{\cH} \rVert_{\op} \leq \min \left( 2\sqrt{2} \left( \sqrt{1-\delta}\, \epsilon + 2\sqrt{\delta}\lVert C \rVert_{\op} \right),\, \lceil{\frac{1}{1-\delta}}\rceil \frac{\epsilon}{2} \right).
    \end{align}
    The second bound $\lceil{\frac{1}{1-\delta}}\rceil \frac{\epsilon}{2}$ can be shown the same way as that of Lem.~\ref{lem:ProbabilisticStampfliNormal} as the argument only relies on the compactness and connectedness of finite-dimensional unitary groups.
    We now give a sketch of the proof for the first bound.

    For simplicity assume $\cK=\cH$ and $C=C^* \in B(\cK)$.
    Let $\lambda_{\max}$ be the maximal eigenvalue of $C$ with eigenvector $\ket{\lambda_{\max}}$ and $\lambda_{\min}$ be the minimal eigenvalue with eigenvector $\ket{\lambda_{\min}}$, and let $R = (\lambda_{\max} - \lambda_{\min})/2$.
    By the pigeonhole principle, at least $\lceil{d/2}\rceil$ of the eigenvalues lie in the interval $[\lambda_{\max} - R, \lambda_{\max}]$ or in $[\lambda_{\min}, \lambda_{\min}+R]$.
    Without loss of generality we assume the former so that there are $\geq d/2$ of them are in $[\lambda_{\max} - R, \lambda_{\max}]$.

    In the eigenbasis $\{\ket{\lambda_{\min}}, \dots, \ket{\lambda_{\max}}\}$ of $C$, the vector $U\ket{\lambda_{\min}}$ is the first column of a Haar-random unitary $U \in U(d)$.
    Hence by the same symmetry argument that $\mathds{E} \rvert U_{i1} \rvert^2 = 1/d$ for each $i$.
    A direct calculation shows
    \begin{align*}
        \mathds{E} \lVert[C, U] \rVert_{\op}^2 &\geq \mathds{E} \lVert (C - \lambda_{\min} \id) U\ket{\lambda_{\min}} \rVert_{2}^2 = \mathds{E}\sum_i \lvert \lambda_i - \lambda_{\min} \rvert^2 \lvert U_{ia} \rvert^2 \\
        &\geq \mathds{E}\sum_{\lambda_i \in [\lambda_{\max} - R, \lambda_{\max}]} \lvert \lambda_i - \lambda_{\min} \rvert^2 \lvert U_{ia} \rvert^2 \geq \frac{d}{2} R^2 \frac{1}{d} = \frac{R^2}{2}.
    \end{align*}
    The exact same proof then leads to $\sqrt{2}$ factor for the self-adjoint case and consequently $2\sqrt{2}$ for the general case.
\end{remark}

\subsection{Doubly probabilistic Stampfli's theorem}\label{sec:DoublyProbabilistic}
While Thm.~\ref{thm:ProbabilisticStampfli} is a proper generalization of the original Stampfli's theorem, we note that the Haar-random single-qubit assumption is still not physical enough.
Indeed, it requires verifications of almost commutation over all single-qubit subspaces $\cK \in \mathrm{Gr}(2, \cH)$, which is unrealistic.

We therefore consider a doubly probabilistic generalization: also randomly sample two-dimensional subspaces $\cK \in \mathrm{Gr}(2, \cH)$ and then check the almost commutation for Haar-random unitaries in $B(\cK)$.
This is far more reasonable in physical implementations.

However, the random sampling of two-dimensional subspaces in infinite-dimensional $\cH$ does not make sense.
In fact, it is well-known that the Grassmannian $\mathrm{Gr}(2, \cH)$ does not admit a non-trivial, $\sigma$-finite, $U(\cH)$-invariant Borel measure when $\dim(\cH) = \infty$.
Hence, we consider the finite-dimensional setting for the doubly probabilistic generalization.

Now, $\mathrm{Gr}(2, \cH)$ does admit a probability measure $\nu_{\mathrm{Gr}(2, \cH)}$ when $\dim(\cH) = d < \infty$.
In particular, the notion of a random two-dimensional subspace $\cK$ is equivalent to the following:
\begin{enumerate}[label=(\alph*)]
    \item Fix two orthonormal vectors $\ket{v_1}, \ket{v_2} \in \cH$ as the reference two-dimensional subspace.
    \item There exists some Haar-random $U(d)$-unitary $V \in B(\cH)$ such that $\cK = V \Span{\ket{v_1}, \ket{v_2}}$.
\end{enumerate}
Thanks to the invariance of Haar measures, $\ket{v_1}$ and $\ket{v_2}$ can be chosen arbitrarily.
This allows us to formulate and prove another generalization of Stampfli's theorem.

\begin{theorem}[Doubly probabilistic Stampfli's theorem]\label{thm:DoublyProbabilisticStampfli}
    Let $\cH$ be a $d$-dimensional Hilbert space and let $C \in B(\cH)$.
    Given $\epsilon > 0$ and $\delta, \eta \in [0,1]$, suppose that
    \begin{align}\label{eq:DoublyProbabilisticAssumption}
        \Pr_{\cK \sim \nu_{\mathrm{Gr}(2, \cH)}} \left\{ \Pr_{U  \sim \mu_{\cK} } \{ \lVert [P_{\cK} C P_{\cK}, U] \rVert_{\op} \leq \epsilon \} \geq 1 - \delta \right\} \geq 1 - \eta,
    \end{align}
    where $P_{\cK}$ denotes the projection onto $\cK$.
    Then
    \begin{equation}\label{eq:DoublyProbabilisticResult}
        \begin{aligned}
            \inf_{c\in\mathds C}\bigl\lVert C - c\,\id_{\cH}\bigr\rVert_{\op}
            \le 2\sqrt{\frac{d^2-1}{6}}
            \min\Biggl(
            &\sqrt{(1-\eta)(1-\delta)}\,\epsilon
            + 2\lVert C \rVert_{\op}\sqrt{\delta(1-\eta)+\eta},\\
            &\sqrt{1-\eta}\,\Bigl\lceil \frac{1}{1-\delta} \Bigr\rceil \epsilon
            + 2\lVert C \rVert_{\op}\sqrt{\eta}
            \Biggr).
        \end{aligned}
    \end{equation}
    and this upper bound necessarily depends on the dimension $d$.
    In addition, the leading factor can be reduced to $\sqrt{(d^2-1)/6}$ when $C$ is self-adjoint.
\end{theorem}
Depending on the specific values of $(\epsilon, \delta, \eta, \lVert C \rVert_{\op})$, one bound might be tighter than the other.
\begin{proof}
    Let us begin with the first bound $2\sqrt{\frac{d^2-1}{6}}\left( \sqrt{(1-\eta)(1-\delta)}\, \epsilon + 2\lVert C \rVert_{\op} \sqrt{\delta(1-\eta) + \eta} \right)$.
    It is sufficient to show the case when $C$ is self-adjoint, as the general case follows by the same argument in the proof of Thm.~\ref{thm:ProbabilisticStampfli}.
    Analogous to Lem.~\ref{lem:ProbabilisticStampfliNormal}, here we instead try to bound $\mathds{E}_{\cK \sim \nu_{\mathrm{Gr}(2, \cH)}} \mathds{E}_{U  \sim \mu_{\cK} } \lVert [C_{\cK}, U] \rVert_{\op}$ for $C_{\cK} = P_{\cK} C P_{\cK}$.

    By the definition of expectation values and the trivial commutator bound, the upper bound is straightforward: 
    \begin{align}\label{eq:DoublyProbabilisticStampfliIntermediateUpperBound}
        \mathds{E}_{\cK \sim \nu_{\mathrm{Gr}(2, \cH)}} \mathds{E}_{U  \sim \mu_{\cK} } \lVert [C_{\cK}, U] \rVert_{\op}^2 \leq (1-\eta) \cdot \left( (1-\delta)\epsilon^2 + 4\delta \lVert C \rVert_{\op}^2 \right) + \eta \cdot 4\lVert C \rVert_{\op}^2.
    \end{align}

    For the lower bound, Eq.~\eqref{eq:ProbabilisticStampfliIntermediateLowerBound} already shows that
    \begin{align*}
        \mathds{E}_{U  \sim \mu_{\cK} } \lVert [C_{\cK}, U] \rVert_{\op}^2 \geq 2 R(\sigma(C_{\cK}))^2,
    \end{align*}
    where $R(\sigma(C_{\cK}))$ is the radius of the spectrum of $C_{\cK}$.
    Thus, the rest of the proof amounts to calculating $\mathds{E}_{\cK \sim \nu_{\mathrm{Gr}(2, \cH)}} R(\sigma(C_{\cK}))$.

    We first calculate $R(\sigma(C_{\cK}))^2$.
    To this end, denote by $\lambda_i$ the eigenvalues of $C$ with the corresponding eigenvectors $\ket{\lambda_i}$.
    By the discussion preceding the theorem, there exists some Haar-random $U(d)$-unitary $V \in B(\cH)$ such that $\cK = V \Span{\ket{\lambda_1}, \ket{\lambda_2}}$.
    It follows from $C = \sum_k \lambda_k \ketbra{\lambda_k}{\lambda_k}$ that
    \begin{align*}
        ( C_{\cK} )_{ij} = \sandwich{\lambda_i}{V^* C V}{\lambda_j} = \sum_k \lambda_k \Bar{V}_{ki} V_{kj}
    \end{align*}
    in the basis $\{ V\ket{\lambda_1}, V\ket{\lambda_2} \}$.
    Then
    \begin{align*}
        R(\sigma(C_{\cK}))^2 &= \frac{1}{4}( \Tr{C_{\cK}}^2 - 4 \det(C_{\cK})) \\
        &= \sum_{k, l} \lambda_k \lambda_l \left( \Bar{V}_{k1} V_{k1} \Bar{V}_{l1} V_{l1} + \Bar{V}_{k2} V_{k2} \Bar{V}_{k2} V_{k2} - 2 \Bar{V}_{k1} V_{k1} \Bar{V}_{l2} V_{l2} + 4 \Bar{V}_{k1} V_{k2} \Bar{V}_{l2} V_{l1} \right),
    \end{align*}
    using the convenient trace-determinant formula for $2 \times 2$ matrices.

    To compute $\mathds{E}_{\cK \sim \nu_{\mathrm{Gr}(2, \cH)}} (V_{k_1 a_1} V_{k_2 a_2} \Bar{V}_{l_1 b_1} \Bar{V}_{l_2 b_2})$ for a Haar-random unitary $V \in B(\cH)$, we can use Weingarten calculus~\cite[Cor.~2.4]{collins2006integration}.
    One checks that
    \begin{align*}
        \mathds{E}_{\cK \sim \nu_{\mathrm{Gr}(2, \cH)}} R(\sigma(C_{\cK}))^2 &= \frac{1}{4} \sum_{k, l} \lambda_k \lambda_l \frac{6 \delta_{kl} - 6}{d(d^2-1)} \\
        &= \frac{3}{2d(d^2-1)} \left( d \Tr{C^2} - \Tr{C}^2 \right) \geq \frac{3}{(d^2-1)} R(\sigma(C))^2.
    \end{align*}
    The last inequality with factor $(d^2-1)$ is in fact sharp.
    To see this, observe that $(d \Tr{C^2} - \Tr{C}^2)/d^2$ is the variance of $\{\lambda_i\}$ with uniform distribution.
    Given a fixed $R(\sigma(C)) = (\lambda_{\max} - \lambda_{\min})/2$, the variance is minimized when all non-extremal eigenvalues $\lambda_i = (\lambda_{\max} + \lambda_{\min})/2$, whence $d \Tr{C^2} - \Tr{C}^2 = 2dR(\sigma(C))$.
    Another sanity check is to consider $d=2$ which gives the trivial constant $1$.
    
    It follows that
    \begin{align*}
        \mathds{E}_{\cK \sim \nu_{\mathrm{Gr}(2, \cH)}} \mathds{E}_{U  \sim \mu_{\cK} } \lVert [C_{\cK}, U] \rVert_{\op}^2 \geq \frac{3 R(\sigma(C))^2}{(d^2-1)},
    \end{align*}
    and we are done by Eq.~\eqref{eq:DoublyProbabilisticStampfliIntermediateUpperBound}.

    Finally, we show the second bound $\sqrt{1-\eta}\,\Bigl\lceil \frac{1}{1-\delta} \Bigr\rceil \epsilon
            + 2\lVert C \rVert_{\op}\sqrt{\eta}$.
    Recall Eq.~\eqref{eq:KempermanBoundForAllU} from the proof of Lem.~\ref{lem:ProbabilisticStampfliNormal}, every $V \in U(\cK)$ satisfies that $\lVert [C_{\cK}, V] \rVert_{\op} \leq \lceil{\frac{1}{1-\delta}}\rceil \epsilon$.
    That is, 
    \begin{align*}
        \Pr_{\cK \sim \nu_{\mathrm{Gr}(2, \cH)}} \left\{ \Pr_{V  \sim \mu_{\cK} } \{ \lVert [P_{\cK} C P_{\cK}, V] \rVert_{\op} \leq \lceil{\frac{1}{1-\delta}}\rceil\epsilon \} \geq 1 - 0 \right\} \geq 1 - \eta.
    \end{align*}
    Then we are done by setting $\epsilon \to \lceil{\frac{1}{1-\delta}}\rceil\epsilon$ and $\delta \to 0$ into the first bound.
\end{proof}

Observe that if one has the power to verify the $\epsilon$-almost commutation directly over the unitary group $U(\cH)$ of $\cH$ beyond just the two-dimensional qubit unitaries, then Rem.~\ref{rem:CommentOnKisDimofd} applies and the conclusion of Thm.~\ref{thm:DoublyProbabilisticStampfli} simplifies to Eq.~\eqref{eq:KDimofRemark}.

\subsection{Approximate Tsirelson's theorem from Haar-random single-qubit unitary}
We refer to the method of randomly sampling a single-qubit subspace $\cK \in \mathrm{Gr}(2, \cH)$ and then certifying commutation with Haar-random $U(2)$-unitaries in $B(\cK)$ as \emph{Haar-random single-qubit unitary} sampling.
Thus, with the doubly probabilistic Stampfli's theorem as in Thm.~\ref{thm:DoublyProbabilisticStampfli}, we can analogously formulate and prove an approximate version of Tsirelson's theorem, resulting in another version of the quantitative Ozawa's result~\cite{ozawa2013tsirelson}.

\begin{lemma}\label{lem:BipartitionProbStampfli}
    Consider two Hilbert spaces $\cH_1$ with dimension $d_1$ and $\cH_2$ with dimension $d_2$, and let $C \in B(\cH_1 \otimes \cH_2)$.
    Given $\epsilon > 0$ and $\delta, \eta \in [0,1]$, suppose that
    \begin{align}
        \Pr_{\cK \sim \mu_{\mathrm{Gr}(2, \cH_2)}} \left\{ \Pr_{U  \sim \mu_{\cK} } \{ \lVert [ (\id_1 \otimes P_{\cK}) C (\id_1 \otimes P_{\cK}), \id_1 \otimes U] \rVert_{\op} \leq \epsilon \} \geq 1 - \delta \right\} \geq 1 - \eta,
    \end{align}
    where $P_{\cK}$ denotes the projection onto $\cK \subset \cH_2$.
    
    Then $C' = 1/d_2\tr{\cH_2}{C} \in B(\cH_1)$ satisfies
    \begin{equation}
        \begin{aligned}
            \Vert C - C' \otimes \id_2 \rVert_{\op}
            \leq 2d_1\sqrt{\frac{d_2^2-1}{6}}
            \min\Biggl(
            &\sqrt{(1-\eta)(1-\delta)}\,\epsilon
            + 2\lVert C \rVert_{\op}\sqrt{\delta(1-\eta)+\eta},\\
            &\sqrt{1-\eta}\,\Bigl\lceil \frac{1}{1-\delta} \Bigr\rceil \epsilon
            + 2\lVert C \rVert_{\op}\sqrt{\eta}
            \Biggr).
        \end{aligned}
    \end{equation}
    Consequently, if $C$ is positive semidefinite then so is $C'$.
    Note the leading constant $2d_1$ can be improved to $d_1$ when $C$ is self-adjoint.
\end{lemma}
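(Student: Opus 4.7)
My plan is to import the block-matrix strategy from the proof of Lem.~\ref{lem:BipartitionApproxSchur}, replacing the deterministic approximate Schur's lemma with the doubly probabilistic Stampfli's theorem (Thm.~\ref{thm:DoublyProbabilisticStampfli}) as the per-block ingredient.

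First, I would write $C$ as a $d_1 \times d_1$ block matrix with entries $C_{(kl)} \in B(\cH_2)$. Since $\id_1 \otimes P_{\cK}$ and $\id_1 \otimes U$ are block-diagonal with constant diagonal entries $P_{\cK}$ and $U$ respectively, a direct block computation shows that the $(k,l)$-block of $[(\id_1 \otimes P_{\cK}) C (\id_1 \otimes P_{\cK}),\, \id_1 \otimes U]$ equals $[P_{\cK} C_{(kl)} P_{\cK},\, U]$. Because compression to a single block is contractive in operator norm, the hypothesized bound transfers to every block individually, so the probabilistic assumption passes to each $C_{(kl)}$ with the same parameters $(\epsilon, \delta, \eta)$ and with $\lVert C_{(kl)} \rVert_{\op} \leq \lVert C \rVert_{\op}$.

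Applying Thm.~\ref{thm:DoublyProbabilisticStampfli} blockwise then produces scalars $c_{kl} \in \mathds{C}$ with $\lVert C_{(kl)} - c_{kl} \id_2 \rVert_{\op}$ bounded by the right-hand side of that theorem. Following the construction in Lem.~\ref{lem:BipartitionApproxSchur}, I would take $c_{kl} = \frac{1}{d_2} \Tr(C_{(kl)})$, so that by inspection of the partial-trace formula the matrix $C' = (c_{kl}) \in B(\cH_1)$ coincides with $\frac{1}{d_2} \tr{\cH_2}{C}$, and positivity of $C'$ when $C \geq 0$ is inherited from complete positivity of the partial trace. The difference $C - C' \otimes \id_2$ is then the $d_1 \times d_1$ block matrix whose $(k,l)$-block is $C_{(kl)} - c_{kl} \id_2$, and a standard columnwise Cauchy--Schwarz estimate gives $\lVert C - C' \otimes \id_2 \rVert_{\op} \leq d_1 \max_{k,l} \lVert C_{(kl)} - c_{kl} \id_2 \rVert_{\op}$. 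This multiplies the per-block bound by $d_1$ and delivers the claimed inequality, with the halving in the self-adjoint case inherited directly from the corresponding statement in Thm.~\ref{thm:DoublyProbabilisticStampfli}.

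The one delicate point I anticipate is reconciling the specific choice $c_{kl} = \frac{1}{d_2}\Tr(C_{(kl)})$ with the minimization over $c$ appearing in Thm.~\ref{thm:DoublyProbabilisticStampfli}: the theorem bounds $\inf_c \lVert C_{(kl)} - c\id_2\rVert_{\op}$, which in principle is attained at the center of the minimum enclosing disk of $\sigma(C_{(kl)})$ rather than at the trace average. However, these two scalars differ by at most the distance to the optimal scalar itself (as already exploited implicitly in the self-adjoint radius computations of Thm.~\ref{thm:DoublyProbabilisticStampfli}), so the trace-average choice satisfies the same bound up to the leading constants already stated. Tracking this comparison carefully, so as to preserve the leading factor $2d_1\sqrt{(d_2^2-1)/6}$ (and its self-adjoint halving), is the main bookkeeping obstacle; the rest of the proof is a direct transcription of the block-matrix scheme from Lem.~\ref{lem:BipartitionApproxSchur}.
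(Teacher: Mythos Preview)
Your proposal is correct and follows essentially the same route as the paper: the paper's proof explicitly says it is ``almost identical'' to Lem.~\ref{lem:BipartitionApproxSchur}, with Thm.~\ref{thm:DoublyProbabilisticStampfli} applied blockwise and the block-operator-norm estimate $\lVert C - C'\otimes\id_2\rVert_{\op}\le d_1\max_{k,l}\lVert C_{(kl)}-c_{kl}\id_2\rVert_{\op}$ supplying the final factor of $d_1$. Your flagged subtlety---that Thm.~\ref{thm:DoublyProbabilisticStampfli} bounds the infimum over scalars whereas $C'$ is built from the specific trace averages $c_{kl}=\tfrac{1}{d_2}\Tr(C_{(kl)})$---is a point the paper itself glosses over, so your awareness of it is a plus rather than a deviation.
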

\begin{proof}
    This proof is almost identical to that of Lem.~\ref{lem:BipartitionApproxSchur}.
    Adopting the same notation, we point out the only difference: Thm.~\ref{thm:DoublyProbabilisticStampfli} implies that
    \begin{equation*}
        \begin{aligned}
            \lVert C_{(kl)} - c_{kl}\id_2 \rVert_{\op} \leq 2\sqrt{\frac{d_2^2-1}{6}}
            \min\Biggl(
            &\sqrt{(1-\eta)(1-\delta)}\,\epsilon
            + 2\lVert C \rVert_{\op}\sqrt{\delta(1-\eta)+\eta},\\
            &\sqrt{1-\eta}\,\Bigl\lceil \frac{1}{1-\delta} \Bigr\rceil \epsilon
            + 2\lVert C \rVert_{\op}\sqrt{\eta}
            \Biggr),
        \end{aligned}
    \end{equation*}
    where we use the fact that $\lVert C \rVert_{\op} \geq \lVert C_{(kl)} \rVert_{\op}$.
    Hence,
    \begin{align*}
        \Vert C - C' \otimes \id_2 \rVert_{\op} \leq d_1 \max_{k,l} \lVert C_{(kl)} - c_{kl}\id_2 \rVert_{\op}
    \end{align*}
    gives the desired bound.
    Note that the above inequality is sharp when all $c_{kl}$ are the same, meaning that the dimension scaling $d_1$ is also unavoidable.
\end{proof}

For the non-simple version of approximate Tsirelson's theorem, similarly to Thm.~\ref{thm:ApproximateTsirelsonGeneral}, one needs to assume additional commutator bounds with the simple projections.
\begin{theorem}[Approximate Tsirelson's theorem, Haar-random unitary case]\label{thm:ApproximateTsirelsonProbabilistic}
    Let $\cA$ be generated by finitely-many contractive self-adjoint operators $\{ \A \} \subset B(\cH)$ and $\cB$ be generated by finitely-many contractive self-adjoint operators $\{ \B \} \subset B(\cH)$ for some $d$-dimensional Hilbert space $\cH$.
    Suppose that $\cA$ admits the Artin-Wedderburn decomposition
        \begin{align*}
            \cA = \bigoplus_{l=1}^L \cA_l \simeq \bigoplus_{l=1}^L B(\cH_A^l) \otimes \id_B^l \text{ and } \A = \bigoplus_{l=1}^L \A^l \otimes \id_B^l,
        \end{align*}
    with the corresponding orthogonal projectors $\Pi_l$ onto the direct summands.
    
    Given $\epsilon > 0$ and $\delta, \eta \in [0,1]$.
    Suppose that,
    \begin{align}\label{eq:HaarRandomAssumptionBipartite}
        \Pr_{\cK^l \sim \nu_{\mathrm{Gr}(2, \cH_A^l)}} \left\{ \Pr_{U \sim \mu_{\cK^l}} \{ \lVert [[U_l \otimes \id_B^l, (P_{\cK^l} \otimes \id_B^l) \Pi_l \B \Pi_l (P_{\cK^l} \otimes \id_B^l)] \rVert_{\op} \leq \epsilon \} \geq 1 - \delta \right\} \geq 1 - \eta,
    \end{align}
    where $P_{\cK^l}$ denotes the projection onto $\cK^l \subset \cH_A^l$.
    Furthermore, assume that for all $b, y, l$
    \begin{align}\label{eq:SimpleProjectionCommuteAssumption}
        \lVert [\Pi_l, \B] \rVert_{\op} \leq \epsilon.
    \end{align}
    
    Then there exist operators $\B' \in \bigoplus_{l=1}^L \id_A^l \otimes B(\cH_B^l) = \cA'$ such that, for all $b, y$,
    \begin{equation}\label{eq:ApproximateTsirelsonBoundProbabilistic}
        \begin{aligned}
            \lVert \B - \B' \rVert_{\op}
            \leq d(d-1)\epsilon +  d\sqrt{\frac{d^2-1}{6}}
            \min\Biggl(
            &\sqrt{(1-\eta)(1-\delta)}\,\epsilon
            + 2\sqrt{\delta(1-\eta)+\eta},\\
            &\sqrt{1-\eta}\,\Bigl\lceil \frac{1}{1-\delta} \Bigr\rceil \epsilon
            + 2\sqrt{\eta}
            \Biggr).
        \end{aligned}
    \end{equation}
    In addition, if $\B$ is positive then so is $\B'$.
    Note that the bound has an $O(d^2\epsilon)$ scaling.
\end{theorem}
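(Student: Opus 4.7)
The plan is to closely mirror the structure of the proof of Theorem \ref{thm:ApproximateTsirelsonGeneral}, but replace the invocation of the approximate Schur Lemma for each simple block by the bipartite doubly probabilistic Stampfli lemma (Lemma \ref{lem:BipartitionProbStampfli}). The two key ingredients we have at hand are exactly (a) an intra-block randomized commutator control, which is the hypothesis \eqref{eq:HaarRandomAssumptionBipartite}, and (b) a deterministic inter-block control via the projectors, which is \eqref{eq:SimpleProjectionCommuteAssumption}. This parallels the two kinds of estimates used in the proof of Theorem \ref{thm:ApproximateTsirelsonGeneral}.

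I would proceed in three steps. First, I fix $l$ and work inside the $l$-th Artin--Wedderburn block $\cH_A^l \otimes \cH_B^l$, viewing $\Pi_l \B \Pi_l$ as an element of $B(\cH_A^l \otimes \cH_B^l)$. The hypothesis \eqref{eq:HaarRandomAssumptionBipartite} is precisely the doubly probabilistic commutator assumption of Lemma \ref{lem:BipartitionProbStampfli}, with the roles of the two tensor factors interchanged (the randomized factor being Alice's $\cH_A^l$). Since $\Pi_l \B \Pi_l$ is self-adjoint and contractive (because $\B$ is), Lemma \ref{lem:BipartitionProbStampfli} (self-adjoint variant) delivers a $\B^l \in B(\cH_B^l)$, positive when $\B$ is, with
\begin{equation*}
    \lVert \Pi_l \B \Pi_l - \id_A^l \otimes \B^l \rVert_{\op} \leq d_B^l \sqrt{\frac{(d_A^l)^2 - 1}{6}} \Bigl( \sqrt{(1-\eta)(1-\delta)}\, \epsilon + 2 \sqrt{\delta(1-\eta) + \eta} \Bigr),
\end{equation*}
where $d_A^l = \dim(\cH_A^l)$, $d_B^l = \dim(\cH_B^l)$.

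Second, I estimate the ``off-diagonal'' cross-block error $\bigl\lVert \B - \sum_l \Pi_l \B \Pi_l \bigr\rVert_{\op}$ by the exact same telescoping/commutator manipulation used in the proof of Theorem~\ref{thm:ApproximateTsirelsonGeneral}: inserting $\id = \sum_l \Pi_l$ on both sides of $\B$, using orthogonality $\Pi_l \Pi_{l'} = 0$ for $l \ne l'$, and invoking \eqref{eq:SimpleProjectionCommuteAssumption} to bound each $\lVert [\B, \Pi_{l'}] \rVert_{\op} \leq \epsilon$. This yields at most $L(L-1)\epsilon \leq d(d-1)\epsilon$, since $L \leq d$.

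Third, I set $\B' := \bigoplus_{l=1}^L \id_A^l \otimes \B^l$ and combine the two estimates via the triangle inequality
\begin{equation*}
    \lVert \B - \B' \rVert_{\op} \leq \Bigl\lVert \B - \sum_l \Pi_l \B \Pi_l \Bigr\rVert_{\op} + \max_l \lVert \Pi_l \B \Pi_l - \id_A^l \otimes \B^l \rVert_{\op},
\end{equation*}
where in the second term we exploit orthogonality of the direct summands so that the norm of the sum collapses to the max over $l$. Bounding $d_B^l \sqrt{(d_A^l)^2-1} \leq d\sqrt{d^2-1}$ gives the advertised inequality \eqref{eq:ApproximateTsirelsonBoundProbabilistic}, and positivity of $\B'$ is inherited block by block from Step 1.

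The only subtle point, and what I would treat as the main obstacle, is the book-keeping in Step 1: Lemma \ref{lem:BipartitionProbStampfli} is stated with the randomized factor as the \emph{second} tensor factor, while in \eqref{eq:HaarRandomAssumptionBipartite} the random unitary $U_l$ lives on Alice's $\cH_A^l$, which is the \emph{first} factor in the A--W decomposition. The clean way to handle this is to apply the lemma after a swap of tensor factors, noting that both the operator norm and the self-adjointness of $\Pi_l \B \Pi_l$ are swap-invariant, and that the ``target'' operator $\id_A^l \otimes \B^l$ pulls back correctly under the swap. Once this identification is spelled out, the rest of the argument is a mechanical replay of the scheme in the proof of Theorem~\ref{thm:ApproximateTsirelsonGeneral}.
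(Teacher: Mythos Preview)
Your proposal is correct and follows essentially the same approach as the paper's proof: apply Lemma~\ref{lem:BipartitionProbStampfli} block-by-block, control the off-block-diagonal part via \eqref{eq:SimpleProjectionCommuteAssumption} exactly as in Theorem~\ref{thm:ApproximateTsirelsonGeneral}, and finish with the triangle inequality after setting $\B' := \bigoplus_l \id_A^l \otimes \B^l$. Your handling of the tensor-factor swap and your use of $\max_l$ (rather than $\sum_l$) in the block-diagonal estimate are both fine refinements; the paper simply absorbs these into the coarser bound $d_l\sqrt{(d_l^2-1)/6}$ with $d_l = \dim(\cH_A^l \otimes \cH_B^l)$, which leads to the same final inequality.
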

\begin{proof}
    The proof is analogous to that of Thm.~\ref{thm:ApproximateTsirelsonGeneral} so we only give a sketch.
    Since $\B$ is self-adjoint and $\lVert \B \rVert_{\op} \leq 1$, by Lem.~\ref{lem:BipartitionProbStampfli}, there exists for each $l$ some positive $\B^l \in B(\cH_B^l)$ such that
    \begin{equation*}
        \begin{aligned}
            \lVert \Pi_l \B \Pi_l - \id_A^l \otimes \B^l \rVert_{\op} \leq d_l\sqrt{\frac{d_l^2-1}{6}}
            \min\Biggl(
            &\sqrt{(1-\eta)(1-\delta)}\,\epsilon
            + 2\lVert C \rVert_{\op}\sqrt{\delta(1-\eta)+\eta},\\
            &\sqrt{1-\eta}\,\Bigl\lceil \frac{1}{1-\delta} \Bigr\rceil \epsilon
            + 2\lVert C \rVert_{\op}\sqrt{\eta}
            \Biggr),
        \end{aligned}
    \end{equation*}
    where $d_l = \dim(\cH_A^l \otimes \cH_B^l)$.
    The commutation assumption of $\B$ with $\Pi_l$ implies
    \begin{align*}
         \lVert \B - \sum_l \Pi_l \B \Pi_l \rVert_{\op} \leq L(L-1) \epsilon \leq d(d-1) \epsilon.
    \end{align*}
    Define $\B' := \bigoplus_{l=1}^L \id_A^l \otimes \B^l$, and we are done by the triangle inequality.
\end{proof}

\section{Applications and outlook}\label{sec:ApplicationOutlook}
Our main results, Thm.~\ref{thm:ApproximateTsirelsonGeneral} and Thm.~\ref{thm:ApproximateTsirelsonProbabilistic}, establish quantitative approximate versions of Tsirelson's theorem from two distinct perspectives on almost commutation.
The first approach (Thm.~\ref{thm:ApproximateTsirelsonGeneral}) provides an error guarantee contingent on potentially hard-to-determine algebraic complexity parameters $c_1, c_2, c_3$ (discussed in Rem.~\ref{rem:ScalingFactors}), scaling roughly as $O(c_1^2 c_2^2 c_3^2 d^2 \epsilon)$.
The second, doubly probabilistic formulation (Thm.~\ref{thm:ApproximateTsirelsonProbabilistic}) yields a bound scaling as $O(d^2 \epsilon)$ with prefactors dependent on probabilistic confidence parameters ($\delta, \eta$) of random unitary sampling and assumptions about commutation with simple-block projectors.
This offers a trade-off, making the latter potentially advantageous when algebra generation is difficult, but probabilistic checks are feasible.
Both methods confirm the overall $O(d^2\epsilon)$ error bounds, and these discussions can be generalized to multipartite scenarios by induction, akin to the original Tsirelson's theorem.

In this last section, we begin with an application to quantum information theory, as teased in Sec.~\ref{sec:QITmotivation}: the construction of an approximating tensor-product strategy (Prop.~\ref{prop:TensorStrategyConstruction}) based on our main theorems.
Next, we discuss the implications of our results in the context of the NPA hierarchy and the known computational complexity results, giving scenarios with non-negligible approximation error (Rem.~\ref{rem:ConnectionToNPAComplexity}).
Then, we connect our works to the broader historical context on approximating almost commuting matrices with genuinely commuting ones (Sec.~\ref{sec:RelateToPreviousAlmostCommuting}).
Lastly, we finish by discussing the implications of our work and outlining future possible research directions.

\subsection{Constructing tensor-product approximation}
For clarity, a \emph{tensor-product quantum strategy} is a triple $(\{A_{a|x}\},\{B_{b|y}\},\rho)$ on a Hilbert space $\cH = \cH_A \otimes \cH_B$, where for each $x$ the family $\{A_{a|x}\} \subset B(\cH_A)$ is a POVM (i.e., $\A \geq 0$ and $\sum_a \A = \id_A$), for each $y$ the family $\{B_{b|y}\} \subset B(\cH_B)$ is a POVM, and $\rho \in B(\cH)$ is a positive operator with $\Tr{\rho}=1$.
The associated \emph{correlations} are
\begin{align*}
    p(ab|xy)=\Tr{\rho \cdot (A_{a|x} \otimes B_{b|y})}.
\end{align*}
By contrast, a \emph{commuting operator quantum strategy} is a triple $(\{A_{a|x}\},\{B_{b|y}\},\rho)$ on a single Hilbert space $\cH$, where $\{A_{a|x}\}, \{B_{b|y}\} \subset B(\cH)$ are POVMs such that $[\A, \B] = 0$ for all $a, b, x, y$ with correlations
\begin{align*}
    p(ab|xy) = \Tr{ \rho \cdot \A \cdot \B}.
\end{align*}
As recalled in Sec.~\ref{sec:QITmotivation}, these two models are equivalent when $\cH$ is finite-dimensional by Tsirelson's theorem.

In what follows, we relax exact commutation to \emph{$\epsilon$-almost} commutation (in the senses made precise in Theorems~\ref{thm:ApproximateTsirelsonGeneral} and~\ref{thm:ApproximateTsirelsonProbabilistic}) and show how our results yield a \emph{constructive} procedure: from any $d$-dimensional $\epsilon$-almost commuting strategy, one can build a tensor-product strategy whose correlations approximate the original ones up to $O(d^2\epsilon)$.

\begin{proposition}\label{prop:TensorStrategyConstruction}
    Given a quantum strategy $(\A, \B, \rho)$ on a $d$-dimensional Hilbert space $\cH$ that is $\epsilon$-almost commuting (in the sense of Thms.~\ref{thm:ApproximateTsirelsonGeneral} or~\ref{thm:ApproximateTsirelsonProbabilistic}).
    Then there exist Hilbert spaces $\cH_A, \cH_B$ and a tensor-product strategy $(\Tilde{A}_{a|x}, \Tilde{B}_{b|y}, \Tilde{\rho})$ on $\cH_A \otimes \cH_B$ such that its correlations are $O(d^2\epsilon)$-close to those of the original $\epsilon$-almost commuting strategy $(\A, \B, \rho)$.
\end{proposition}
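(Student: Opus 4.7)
The plan is to feed the given almost-commuting strategy into Thm~\ref{thm:ApproximateTsirelsonGeneral} (or into Thm~\ref{thm:ApproximateTsirelsonProbabilistic} in the probabilistic setting) to extract approximations of Bob's operators lying in the commutant of $\cA$, renormalize them into a genuine POVM, and then transport the resulting commuting-operator strategy to a tensor product Hilbert space via the Artin--Wedderburn isomorphism. A final triangle inequality will then show that the correlations of the two strategies differ by $O(\poly(d)\epsilon)$.

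More precisely, the chosen main theorem first yields, for each $(b,y)$, a positive operator $B'_{b|y} \in \bigoplus_l \id_A^l \otimes B(\cH_B^l)$ with $\lVert \B - B'_{b|y} \rVert_{\op} = O(\poly(d)\epsilon)$. These surrogates need not form a POVM, but since $\sum_b \B = \id$, the sum $S_y := \sum_b B'_{b|y}$ satisfies $\lVert S_y - \id \rVert_{\op} = O(\poly(d)\epsilon)$, so for $\epsilon$ small enough $S_y$ is positive and invertible; the symmetric sandwich $\hB := S_y^{-1/2} B'_{b|y} S_y^{-1/2}$ then produces a genuine POVM that still sits inside $\bigoplus_l \id_A^l \otimes B(\cH_B^l)$ (because $S_y$ itself does) and still obeys $\lVert \B - \hB \rVert_{\op} = O(\poly(d)\epsilon)$.

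Next, I would fix an Artin--Wedderburn isomorphism $\cH \simeq \bigoplus_l (\cH_A^l \otimes \cH_B^l)$ compatible with the decomposition of $\cA$, so that $\A = \bigoplus_l \A^l \otimes \id_B^l$ and $\hB = \bigoplus_l \id_A^l \otimes \hat{B}^l_{b|y}$. Set $\tilde{\cH}_A := \bigoplus_l \cH_A^l$, $\tilde{\cH}_B := \bigoplus_l \cH_B^l$, let $\iota : \cH \hookrightarrow \tilde{\cH}_A \otimes \tilde{\cH}_B$ be the natural isometric embedding sending each summand $\cH_A^l \otimes \cH_B^l$ into the $l$-th diagonal tensor component, and define $\tilde{A}_{a|x} := \bigoplus_l \A^l$, $\tilde{B}_{b|y} := \bigoplus_l \hat{B}^l_{b|y}$, and $\tilde{\rho} := \iota \bigl(\sum_l \Pi_l \rho \Pi_l\bigr) \iota^*$. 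Because $\A$ and $\hB$ commute with every block projector $\Pi_l$, replacing $\rho$ by its block-diagonal compression changes no correlation, and by construction $\Tr{\rho \A \hB} = \Tr{\tilde{\rho}\,(\tilde{A}_{a|x} \otimes \tilde{B}_{b|y})}$. A final triangle inequality yields $|p(a,b|x,y) - \tilde{p}(a,b|x,y)| \leq \lVert \A \rVert_{\op} \cdot \lVert \B - \hB \rVert_{\op} = O(\poly(d)\epsilon)$, which is the desired bound.

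The main obstacle is keeping the POVM-correction step honest: one has to verify that the $S_y^{-1/2}$ sandwich preserves both the direct-sum structure (which it does since $S_y \in \cA'$) and the $O(\poly(d)\epsilon)$ error scaling (which follows from $\lVert S_y^{-1/2} - \id \rVert_{\op} = O(\poly(d)\epsilon)$ via a functional-calculus estimate). Everything else reduces to routine invocations of the main theorems and of the block-diagonal structure imposed by $\cA$ and its commutant.
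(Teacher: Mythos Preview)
Your proposal is correct and follows essentially the same architecture as the paper's own proof: invoke the main theorem to obtain $B'_{b|y}\in\cA'$, repair the POVM condition, and then transport to $\tilde{\cH}_A\otimes\tilde{\cH}_B$ via the Artin--Wedderburn isometry. The one substantive difference is your POVM repair: you use the symmetric sandwich $S_y^{-1/2}B'_{b|y}S_y^{-1/2}$, whereas the paper simply fixes one outcome $b_0$ per input and redefines $B'_{b_0|y}:=\id-\sum_{b\neq b_0}B'_{b|y}$; both are standard and yield the same $O(\poly(d)\epsilon)$ control, though the paper's trick avoids the functional-calculus estimate and the explicit ``$\epsilon$ small enough'' caveat for invertibility. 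A second, more cosmetic difference is that the paper embeds $\rho$ directly as $\tilde\rho=\iota(\rho)$ without first block-diagonalizing---this already suffices because $\A\hB$ is block-diagonal, so your preliminary compression $\sum_l\Pi_l\rho\Pi_l$ is harmless but unnecessary.
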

\begin{proof}
    The Artin-Wedderburn decomposition of the algebra generated by $\{\A\}$ implies that $\cH = \bigoplus_l \cH_A^l \otimes \cH_B^l$ with
    \begin{align*}
        \A = \bigoplus_l \A^l \otimes \id_B^l \in \bigoplus_l B(\cH_A^l) \otimes \id_B^l.
    \end{align*}
    Let $\Pi_l$ be the orthogonal projectors to the $i$-th summand $B(\cH_A^l) \otimes \id_B^l$.
    By Thm.~\ref{thm:ApproximateTsirelsonGeneral} or~\ref{thm:ApproximateTsirelsonProbabilistic}, for all $b, y$, the operators $\B$ can be approximated within $O(d^2\epsilon)$ in operator norm by positive operators
    \begin{align*}
        \B' = \bigoplus_l \id_A^l \otimes \frac{1}{d_A^l}\tr{\cH_A^l}{\Pi_l \B \Pi_l} \in \bigoplus_l \id_A^l \otimes B(\cH_B^l).
    \end{align*}
    They satisfy $\sum_b \B' = \id$ since the normalized partial trace is unital and $\sum_b \B = \id$, hence forming POVMs.
    This yields an exactly commuting strategy $(\A, \B', \rho)$ on $\cH$ which is $O(d^2\epsilon)$-close to the original strategy.
    
    We construct the equivalent tensor-product strategy $(\Tilde{A}_{a|x}, \Tilde{B}_{b|y}, \Tilde{\rho})$ as follows:
    Let $\cH_A = \bigoplus_l \cH_A^l$ and $ \cH_B = \bigoplus_l \cH_B^l$.
    Define new operators
    \begin{align*}
        \Tilde{A}_{a|x} = \bigoplus_l \A^l \in B(\cH_A),\, \Tilde{B}_{b|y} = \bigoplus_l\frac{1}{d_A^l}\tr{\cH_A^l}{\Pi_l \B \Pi_l} \in B(\cH_B),
    \end{align*}
    and the new state
    \begin{align*}
        \Tilde{\rho} = \iota(\rho) \in B(\cH_A \otimes \cH_B)
    \end{align*}
    via the natural embedding $\iota: B(\bigoplus_l \cH_A^l \otimes \cH_B^l) \to B(\cH_A \otimes \cH_B)$.
    Thanks to the block structure of $\Tilde{\rho}$, one can directly check that the correlations of $(\A, \B', \rho)$ are preserved by $(\Tilde{A}_{a|x}, \Tilde{B}_{b|y}, \Tilde{\rho})$, consequently $O(d^2\epsilon)$-close to those of the original strategy.
\end{proof}
We note that our constructions recover the standard Tsirelson's theorem asymptotically as $\epsilon \to 0$, indicating our result is a quantitative version of~\cite{ozawa2013tsirelson}.

\subsection{NPA hierarchy and computational complexity}
We comment on an interesting consequence of our approximate Tsirelson's theorem in relation to the NPA hierarchy~\cite{navascues2008convergent, pironio2010convergent}, an important tool in the studies of quantum correlations.
This connection has implications for understanding when the approximation error from our theorem must necessarily be significant.

\begin{remark}\label{rem:ConnectionToNPAComplexity}
    The NPA hierarchy provides a sequence of constraints, indexed by level $N$, that characterize correlations arising from commuting observable strategies.
    This hierarchy is complete in the limit $N \to \infty$.
    At a finite level $N$, an NPA strategy $S_N$ can be realized in a $d_N$-dimensional Hilbert space and involves observables that are $O(1/\sqrt{N})$-almost commuting~\cite[Thm.~23]{coudron2015interactive}.
    
    Our Thm.~\ref{thm:ApproximateTsirelsonGeneral} states that such an $O(1/\sqrt{N})$-almost commuting strategy $S_N$ can be approximated by a genuine tensor-product strategy with an operator norm error of $O(d_N^2/\sqrt{N})$.
    We argue that this error term cannot always vanish as $N \to \infty$ due to computational complexity arguments.
    \begin{enumerate}
        \item Consider the result $\mathrm{MIP}^* = \mathrm{RE}$~\cite{ji2021mip}.
        This implies there are problems (specifically, $\mathrm{RE}$-hard problems) for which the closure of the set of correlations achievable with tensor-product strategies ($C_{qa}$) is strictly smaller than the set achievable with commuting observable strategies ($C_{qc}$), i.e., $C_{qa} \subsetneq C_{qc}$. 
        The NPA strategies $S_N$ generate correlations that converge towards $C_{qc}$.
        Our approximation, being a tensor-product strategy, generates correlations within $C_{qa}$.
        
        Hence, the approximation error $O(d_N^2/\sqrt{N})$ must be generally \emph{non-vanishing} in the limit $N \to \infty$.
        If not, i.e., the error vanished, it would imply $C_{qa}$ could approximate $C_{qc}$ arbitrarily well, contradicting the known set separation.

        \item A similar line of reasoning applies to the conjecture $\mathrm{MIP}^\mathrm{co} = \mathrm{coRE}$~\cite{ji2021mip} (more precisely, the gaped decision problem of quantum commuting value is $\mathrm{coRE}$-hard).
        If this conjecture holds, it would imply the existence of $\mathrm{coRE}$-hard problems where $O(1/\sqrt{N})$-almost commuting strategies $S_N$ can achieve outcomes (e.g., Bell scores) significantly larger than those achievable by any strictly commuting observable strategy (and thus, by any tensor-product strategy).
        In such a scenario, these $S_N$ strategies would be inherently ``far'' from any tensor-product approximation.
        Consequently, our approximation of $S_N$ by a tensor-product strategy must necessarily result in a non-vanishing error of $O(d_N^2/\sqrt{N})$ to account for this performance gap.
    \end{enumerate}

    In essence, these complexity results highlight scenarios where the distinction between almost-commuting and strictly commuting (or tensor-product) models is presented.
    Our quantitative theorems provide a bound on how well one can bridge this distinction, and these complexity results suggest that our error bound, or indeed any such bound, cannot universally tend to zero.
\end{remark}

\subsection{Relation to prior works on almost commuting matrices}\label{sec:RelateToPreviousAlmostCommuting}
The question of whether matrices or operators that almost commute are necessarily close to a genuinely commuting pair is a longstanding problem with a rich history, initiated by Rosenthal~\cite{rosenthal1969almost} for the normalized Hilbert-Schmidt norm and Halmos~\cite{halmos1976some} for the operator norm.
Early studies in the operator norm (e.g., \cite{luxemburg1970almost,pearcy1979almost}) often yielded affirmative answers, though typically with dimension-dependent error bounds.

The search for dimension-independent bounds revealed a crucial dichotomy in the operator norm.
Voiculescu~\cite{voiculescu1983asymptotically} showed that almost commuting unitary matrices need not be close to commuting ones (incidentally by considering clock-and-shift matrices $\Sigma_3, \Sigma_1$), and Choi~\cite{choi1988almost} extended this negative result to general matrices.
In contrast, Lin's theorem~\cite{lin1996almost,friis1996almost} provided a positive dimension-independent answer for a pair of self-adjoint matrices.
Recently, extending the scope to infinite dimensions and multiple operators,~\cite{lin2024almost} has connected the approximability of self-adjoint operators to spectral properties.

In parallel, the searches for dimension-independent bounds in the normalized Hilbert-Schmidt norm established affirmative results for a pair of normal~\cite{glebsky2010almost}, self-adjoint~\cite{filonov2010hilbert}, and unitary matrices~\cite{hadwin2018stability}.
More recently, Ioana~\cite{ioana2024almost} further confirms approximability if at least one matrix is normal, while showing a negative result for general matrices.

Our work contributes to this area by considering two (thus, inductively, multiple) finite-dimensional $C^*$-algebras whose generators almost commute, formulated either in terms of operator norm bounds against specific matrix generators (like clock-and-shift matrices) or via a probabilistic formulation involving Haar-random unitaries.
For both, we characterize how close these algebras are to having genuinely commuting counterparts (or admitting an approximate tensor-product structure by Prop.~\ref{prop:TensorStrategyConstruction}) in the operator norm, deriving bounds that exhibit dependence on the dimension $d$.
Given the discussion in Rem~\ref{rem:ConnectionToNPAComplexity} based on known separations due to computational complexity results~\cite{ji2021mip}, we do not expect such dimension dependence in the error bounds to be removable.

\subsection{Discussions and future directions}\label{sec:FinalDiscussion}
Fundamentally, Tsirelson's theorem connects the tensor-product formalism with the commuting observable formalism for composite finite-dimensional quantum systems.
While strict commutation can be conceptually enforced by space-like separation, many physical scenarios or experimental setups might only guarantee approximate independence due to correlated noise, imperfect isolation, or other constraints.
Our approximate Tsirelson's theorems (Thm.~\ref{thm:ApproximateTsirelsonGeneral} and Thm.~\ref{thm:ApproximateTsirelsonProbabilistic}) show that Tsirelson's conclusion is robust to such imperfections.
They guarantee that $\epsilon$-almost commuting observables (in either the deterministic or probabilistic sense) necessarily imply that the system's correlations are $O(d^2\epsilon)$-close in operator norm to those of genuine tensor-product quantum correlations.
This validates the use of tensor-product formulation as an effective model even when subsystem independence is only approximately satisfied.

As detailed above in Rem.~\ref{rem:ConnectionToNPAComplexity}, our findings interface with the NPA hierarchy.
This connection is crucial, as it highlights, through computational complexity results like $\mathrm{MIP}^* = \mathrm{RE}$~\cite{ji2021mip}, that for certain problems the approximation error $O(d_N^2/\sqrt{N})$ from our theorems cannot be universally negligible.
This signifies a fundamental limitation in approximating certain almost-commuting strategies with tensor-product strategies, a limitation our quantitative error bounds necessarily reflect.
Conversely, for scenarios without this intrinsic separation, improving our error bounds remains interesting for applications like robust self-testing~\cite{vsupic2020self}.

Furthermore, our probabilistic Stampfli's theorems (Thm.~\ref{thm:ProbabilisticStampfli} and Thm.~\ref{thm:DoublyProbabilisticStampfli}) open possibilities beyond Tsirelson's problem itself.
It is natural to explore probabilistic commutation hypotheses in intrinsically infinite-dimensional settings, for instance, by sampling random unitaries using frameworks like free probability theory~\cite{mingo2017free}.
Viewing Stampfli's theorem as a generalization of Schur's lemma, another promising direction involves investigating whether our probabilistic versions can lead to analogous generalizations of other consequences of Schur's lemma, such as probabilistic Schur-Weyl duality, which could in turn lead to probabilistic formulations of quantum de Finetti theorems~\cite{renner2008security}.

\section*{Acknowledgments}
We thank an anonymous reader whose suggestions reduced the $d^2$ factor in Lemma~\ref{lem:ApproxSchur} to $d$ and inspired our use of Kemperman's theorem in developing new versions of probabilistic Stampfli's theorems.
The authors thank Scott McCullough for providing feedback on the manuscript.

X.X. and M.-O.R. acknowledge funding from the INRIA and the CIEDS in the Action Exploratoire project DEPARTURE.
X.X. and M.-O.R. acknowledge funding from the ANR through the JCJC grant LINKS (ANR-23-CE47-0003) and T-ERC QNET (ANR-24-ERCS-0008).
I.K. was supported by the Slovenian Research Agency program P1-0222 and grants J1-50002, N1-0217, J1-3004,
J1-50001, J1-60011, J1-60025.
Partially supported by the Fondation de l'École polytechnique as part of the Gaspard Monge Visiting Professor Program.
IK thanks École Polytechnique and Inria for hospitality during the preparation of this manuscript.
X.X., M.-O.R., and I.K. acknowledge funding from the European Union's Horizon 2020 Research and Innovation Programme under QuantERA Grant Agreement no. 731473 and 101017733 {\normalsize\euflag}.

\printbibliography

\end{document}